
\documentclass[twocolumn,10pt]{IEEEtran}
\usepackage{algorithm,amsbsy,amsmath,amssymb,epsfig,bbm,mathrsfs,fancyhdr,multirow,stmaryrd,amsthm,nopageno,wrapfig}
\usepackage{caption}
\usepackage{color}
\usepackage{epstopdf}
\usepackage[noend]{algpseudocode}
\allowdisplaybreaks
\textwidth 7.025in

\makeatletter
\def\BState{\State\hskip-\ALG@thistlm}
\makeatother


\theoremstyle{definition}

\newtheorem{myrem}{Remark}
\newtheorem{mypro}{Proposition}
\newtheorem{mylem}{Lemma}
\newtheorem{mythe}{Theorem}
\algrenewcommand\algorithmicwhile{\textbf{while}}
\algrenewcommand\algorithmicend{\textbf{end}}


%
\begin{document}
\title{Downlink Power Control in Two-Tier Cellular Networks with Energy-Harvesting Small Cells as Stochastic Games}
\author{Tran Kien Thuc, Ekram Hossain, and Hina Tabassum
\thanks{The authors are with the Department of Electrical and Computer Engineerng at the University of Manitoba, Canada (emails: trant347@myumanitoba.ca, \{Ekram.Hossain,Hina.Tabassum\}@umanitoba.ca). This work was supported by the Natural Sciences and Engineering Research Council of Canada (NSERC). A preliminary version of this work appeared in~\cite{tran}.}
}

\maketitle 

\begin{abstract}

Energy harvesting in cellular networks is an emerging technique to enhance the sustainability of power-constrained wireless devices. 
This paper considers the  co-channel deployment of a macrocell overlaid with small cells. The small cell base stations (SBSs)  harvest energy from environmental sources whereas the macrocell base station (MBS) uses conventional power supply. Given a stochastic energy arrival process for the SBSs, we derive a power control policy for the downlink transmission of both MBS and SBSs such that they can achieve their objectives  (e.g., maintain the signal-to-interference-plus-noise ratio (SINR) at an acceptable level) on a given transmission channel. We consider a centralized energy harvesting mechanism for SBSs, i.e.,  there is a central energy storage (CES) where energy is harvested and then distributed to the SBSs. When the number of SBSs is small, the game between the CES and the MBS is modeled as a single-controller stochastic game and the equilibrium policies are obtained as a solution of a quadratic programming problem. However, when the number of SBSs tends to infinity (i.e., a highly dense network), the centralized scheme becomes infeasible, and therefore, we use a mean field stochastic game to obtain a distributed power control policy for each SBS. By solving a system of partial differential equations, we derive the power control policy of SBSs given  the knowledge of mean field distribution and the available harvested energy levels in the batteries of the SBSs.   
\end{abstract}

\begin{IEEEkeywords}
Small cell networks, power control, energy harvesting, stochastic game, mean field game. 
\end{IEEEkeywords}

\section{Introduction}
Energy harvesting  from environment resources (e.g., through solar panels, wind power, or geo-thermal power) is a potential technique to reduce the energy cost of operating the base stations (BSs) in emerging  multi-tier cellular networks. While this solution may not be practically feasible for macrocell base stations (MBSs) due to their high power consumption and stochastic nature of energy harvesting sources, it is appealing for  small cell BSs (SBSs)   that typically consume less power~\cite{deruyck}. Providing grid power to all SBSs may not always be feasible due to their
possible outdoor/remote/hard-to-reach locations. Wireless energy harvesting thus enables dense deployment of SBSs irrespective of the availability of grid power connections. In general, wireless energy harvesting can be classified into the following two categories: ambient energy harvesting and dedicated energy harvesting. In the former case, energy harvested from renewable energy sources (such as thermal, solar,  wind) as well as the energy harvested from the radio signals in the environment can be sensed by energy-harvesting receivers. In the latter case, energy from dedicated sources is transmitted to energy-harvesting devices to charge them wirelessly.

Designing efficient power control policies with different objectives (e.g., maximizing system throughput) is among one of the major challenges in energy-harvesting networks. 
In \cite{berk}, the authors proposed an offline power control policy for two-hop transmission systems assuming  energy arrival information at the nodes. The optimal transmission policy was given by the directional water filling method. In \cite{ding}, the authors generalized this idea to the case where many sources supply energy to the destinations using a single relay. A water filling algorithm was proposed to minimize the probability of outage. Although the offline power control policies  provide an upper bound and heuristic for online algorithms, the knowledge of energy/data arrivals is required which may not be feasible in practice.  In \cite{nicolo}, the authors proposed a two-state Markov Decision Process (MDP) model for a single energy-harvesting device considering random rate of energy arrival and different priority levels for the data packets. The authors proposed a low-cost balance policy to maximize the system throughput by adapting the energy harvesting state, such that, on average, the harvested and consumed energy  remain balanced. Recently, in \cite{harpreet}, the outage performance analysis was conducted for a multi-tier cellular network in which all BSs are powered by the harvested energy.  
A detailed survey on energy harvesting systems can be found in \cite{gunduz} where the authors summarized the current research trends and potential challenges.
 
Compared to the existing literature on energy-harvesting systems, this paper considers the power control problem for downlink transmission in two-tier macrocell-small cell networks considering stochastic nature of the energy arrival process at the SBSs. In particular, we assume that  ambient energy harvesting is exploited at a central energy storage (CES) from where energy can be transferred to the SBSs, for example, by using dedicated power beacons (PBs). PBs are low-cost devices that can potentially charge wireless terminals by transferring energy in a directional manner.  Note that the power control policies at the MBS and the SBSs and their resulting interference levels directly affect the overall system performance. The  design of efficient power control policies is thus of paramount  importance. 

In the above context, we formulate a discounted stochastic game model in which  all SBSs form a coalition to compete with the MBS in order to achieve the target signal-to-interference-plus-noise ratio (${\rm SINR}$) of their users through transmit power control.  That is, the MBS and the CES (which actually represents the set of SBSs in the game) compete to achieve the desired ${\rm SINR}$ targets of macrocell users and small cell users, respectively. Note that both the MBS and the SBSs transmit in the same channel (i.e., a co-channel deployment scenario is considered). Therefore, the competition (or conflict) arises due to the resulting cross-tier interference, i.e., as the MBS uses more power/energy to increase the utility of macrocell users, it results in higher cross-tier interference to small cell users. Similarly, the more energy/power the CES assigns to SBSs,  
the higher would be the cross-tier interference to macrocell  users. Note that the energy harvesting component is an important factor that indirectly contributes to the conflict. If the energy arrival rate is large, the SBSs will have a larger energy pool to spend and thus cause more interference. Clearly, we need to take into the account the probability of energy arrival when determining the optimal transmit power policy for each SBS. The amount of available energy at the transmitter will vary according the amount of power  transmitted and the energy arrival during each transmission interval. Naturally,  the competition above can be modeled and analyzed by game theoretic tools. However, unlike in traditional power control games, the actions and the payoffs of the transmitters at successive transmission intervals are correlated. This correlation is taken into account in the proposed stochastic game model. For this game model,  the Nash equilibrium power control policy is obtained as the solution of a quadratic programming problem. For the case when the number of SBSs is very large (i.e., an ultra-dense small cell network (SCN) \cite{5g}), the stochastic game is approximated by a mean field game (MFG). In general, MFGs are designed to study the strategic decision making in very large populations of  interacting individuals. Recently, in \cite{prab2}, the authors proposed an MFG model to determine the optimal power control policy for a finite battery powered SCN with no energy replacements. However, the stochastic nature of energy arrival for small cells was not considered. In this paper, we will consider the case where the battery can be recharged using random energy arrivals. By solving a set of forward and backward partial differential equations, we derive a distributed power control policy for each SBS using a stochastic MFG model.

The contributions of the paper can be summarized as follows.
\begin{enumerate}
\item For a two-tier macrocell-small cell network, we consider a centralized energy harvesting mechanism for the SBSs in which energy is harvested and then distributed to the SBSs through a CES. Unlike in \cite{nicolo}, the CES can have any finite number of energy levels in its storage, not only 0 and 1. Note that the concept of CES is somewhat similar to the concept of dedicated power beacons  for wireless energy transfer to users in cellular networks~\cite{khuang14,wpcn}. 
Moreover, in a cloud-RAN architecture~\cite{cran}, where along with data processing resources, a centralized cloud can also act as an energy farm that distributes energy to the remote radio heads each of which acts as an SBS. Note that the SBSs are not restricted to indoor deployments.
Subsequently, we formulate the power control problem for the MBS and SBSs as a discrete single-controller stochastic game with two players. Also, in this paper, we use the signal-to-interference-plus-noise-ratio (SINR) model instead of an SNR model which has been commonly used in other research works on energy harvesting communication. Consideration of random energy arrivals along with both co-tier and cross-tier interferences in the downlink power control problem is the major novelty of the paper.

\item The existence of the Nash equilibrium and pure stationary strategies for this single-controller stochastic game is proven. The power control policy is derived as the solution of a quadratic-constrained quadratic programming problem. 

\item When the network becomes very dense, a stochastic MFG model is used to obtain the power control policy  as a solution of the forward and backward differential equations. In this case, each SBS can harvest, store energy and transmit data by itself.  

\item An algorithm using finite difference method is proposed to solve these forward-backward differential equations for the MFG model.

\end{enumerate}
Numerical results demonstrate that the proposed  power control policies offer reduced outage probability for the users served by the SBSs when compared to the power control policies using a simple Stackelberg game wherein each SBS tries to obtain the target ${\rm SINR}$ of its users without considering the distribution of energy arrivals. 

The rest of the paper is organized as follows. Section II describes the system model and assumptions. The formulation of the single-controller stochastic game model for multiple SBSs is presented in Section~III. In Section~IV, we derive the distributed power control policy using a MFG model when the number of SBSs increases asymptotically. Performance evaluation results are presented in Section~V before the paper is concluded in Section VI.

\section{System Model and Assumptions}


\subsection{Energy Harvesting Model}
We consider a single macrocell  overlaid with $M$ small cells. The downlink co-channel time-slotted transmission of the MBS and SBSs is considered and it is assumed that each BS serves only a single user on a given transmission channel during a transmission interval (e.g., time slot). 
The MBS uses a conventional power source and its transmit power level is quantized~\cite{prab} into a discrete set of power levels $\mathcal{P} = \{p_0^{min},...,p_0^{max}\}$, where the subscript $0$ denotes the MBS.  On the other hand,  the SBSs receive energy from a centralized energy storage (CES),  which harvests renewable energies from the environment. We assume that only the CES can store energy for future use and each SBS must consume all the energy it receives from the CES at every time slot. The energy 
arrives at the CES in the form of packets (one energy packet corresponds to one energy level in CES). The quantization of energy arrival was assumed in other research studies such as in \cite{blasco}. The number of energy packet arrivals $\varphi(t)$ during any time interval $t$ is discrete and follows an  arbitrary distribution, i.e., $\mathrm{Pr}(\varphi(t) = X)$.  We assume that the battery at the CES has a finite storage $S$. Therefore, the number of energy packet arrivals is constrained by this limit and all the exceeding energy packets will be lost, i.e., $\mathrm{Pr}(\varphi(t)=S) =\mathrm{Pr}(\varphi(t) \geq S)$. The statistics of energy arrival is known {\em a priori} at both the MBS and the CES. At time $t$, given the battery level $E(t)$, the number of energy packet  arrivals $\varphi(t)$, and the energy packets $Q(t)$ that the CES distributes to the $M$ SBSs, the  battery level $E(t+1)$ at the next time slot can be calculated as follows:
 		 \begin{equation}\label{energy}
 		 		E(t+1) = E(t) - Q(t)+ \varphi(t).
 		 \end{equation} 

Given $Q(t)$ energy packets to distribute, the CES will choose the best allocation method for the $M$ SBSs according to their desired objectives. Denoting the slot duration as $\Delta T$ and the volume of one energy packet as $K$, we have the energies distributed to the $M$ SBSs at time $t$ as $(p_1(t)\Delta T,p_2(t) \Delta T,\cdots,p_M(t) \Delta T)$ where $p_i(t)$ is the transmit power of SBS $i$ at time $t$. Clearly we must have:

\begin{equation}\label{power}
\sum^{M}_{i=1} p_i(t)= \frac{K}{\Delta T} Q(t).
\end{equation}



From the causality constraint, $E(t) \geq Q(t) \geq 0$, i.e., the CES cannot send more energy than that it currently possesses.  Note that $E(t)$ is the current battery level which is an integer and has its maximum size limited by $S$. Since the battery level of the CES and the number of packet arrivals are integer values, it follows from (\ref{energy}) that $Q(t)$ is also an integer.

Similar to \cite{harpreet} and \cite{guar}, in our system model, the conflict between the CES and the MBS arises due to the interferences between the MBS and the SBSs. Clearly, if the MBS transmits with large power to achieve the ${\rm SINR}$ targets for macrocell users, it will cause high interference to the small cell users. This means, the SBSs will need to transmit with larger power to combat this  cross-tier interference. The CES and the MBS have different objective functions and are free to choose any actions  that maximize their own objectives (i.e., non-cooperative game). Since the SBSs can only use renewable energy,  it is crucial for them to use their harvested energy economically. Also, unlike a  traditional one-shot transmit power control game, the CES needs to take into the account the future payoff given the current battery size and the probability of energy arrivals. In summary, for our CES model, at each time slot, we will have a random  battery size at the CES and our objective is to maintain the long-term average ${\rm SINR}$ close to the target value as much as possible and thus improve the outage probability. 

Without a centralized CES-based architecture, each SBS can have different amount of harvested energy and in turn battery levels at each time slot, which will make this problem a multi-agent stochastic game \cite{bowling}. 
Although this kind of game can be heuristically solved by using Q-learning \cite{hu}, the conditions for convergence to a Nash equilibrium are often very strict and in many cases impractical. By introducing the CES, the number of the possible states of the game is simplified into the battery size of the CES, and the multi-player game is converted into a two-player game.  Another benefit of the centralized CES-based architecture is that the energy can be distributed based on the  channel conditions of the users served by the SBSs so that the total payoff will be higher than the case where each SBS individually stores and consumes the energy. For ease of exposition, in this paper, we consider an ideal energy transfer from the CES to the SBSs. However, to model a simple energy loss, we can add a fixed percentage of loss into the energy consumption of the CES at each time slot.


 All the symbols that are used in the system model and Section III are listed in \textbf{Table I}.

\begin{table*}[ht]
\scriptsize
\caption{List of symbols used for the single-controller stochastic game model}
\centering
\begin{tabular}
{|c|l|c|l|}
\hline 	
\hline
$\bar{g}_i$      & Average channel gain between BS $i$ and its associated user
& 
$\bar{g}_{i,j}$  & Average channel gain between BS $j$ and user of BS $i$
\\
$\lambda_0$ ($\lambda_1$)   & Target SINR for MBS (SBS) &
$E(t)$		& (Discrete) Battery level of CES at time $t$ 
\\
$\Delta T$& Duration of one time slot in seconds &
$Q(t)$ & Number of quanta distributed by the CES at time $t$ \\
$\bar{I}_0(t)$ & Average interference at the user served by the MBS at time $t$&
$\bar{I}_i(t)$ & Average interference at the user served by SBS $i$ at time $t$\\
$S$ & Maximum battery level of the CES &
$\mathcal{P}$ & Finite set of transmit power of the MBS \\
$\mathbf{m}$, $\mathbf{n}$  & 
\parbox[t]{6cm}{Concatenated mixed-strategy vector for the MBS and the CES, respectively}
&
$\mathbf{m}(s)$, $\mathbf{n}(s)$  & 
\parbox[t]{6cm}
{Probability mass function for actions of the MBS and the CES, respectively, when $E(t)=s$} \\
$\mathbf{m}(s,p)$  & 
{Probability that the MBS chooses power $p\in \mathcal{P}$ when $E(t)=s$}
&
$\mathbf{n}(s,i)$  & Probability that the CES sends $i$ quanta when $E(t)=s$ \\
$\varphi(t)$ & Energy harvested at time $t$ 
& $\pi_s$ & Probability that the CES starts with battery level $s$
\\
$\beta$	& Discount factor of the stochastic game &
$U_0,U_1$	& Utility function of the MBS and the CES, respectively 
\\
$R_0,R_1$	& Payoff matrix for the MBS and the CES, respectively &
$\phi_0$, $\phi_1$  & 
\parbox[t]{6cm}
{Discounted sum of the value function of the MBS and the CES, respectively}
\\
\hline
\end{tabular}
\end{table*}

\subsection{Channel and SINR Model}

The received ${\rm SINR}$ at the user served by SBS $i$ in the downlink at time slot $t$ is defined as follows:
 \begin{equation}
	 \gamma_i(t) = \frac{p_i(t) g_{i,i}}{I_i(t)},
 \end{equation}
where $I_i(t) = \sum\limits^{M}_{j \neq i}p_jg_{i,j} + p_0 g_{i,0} $ is the interference caused by other BSs. $g_{i,0}$ is the channel gain between MBS and the user served by SBS  $i$,
$g_{i,i}$ represents the channel gain between SBS $i$ and the user it serves, and $g_{i,j}$ is the channel gain between SBS $j$ and the user served by SBS $i$. Finally, $p_i(t)$ represents the transmit power of SBS $i$ at time $t$.  The transmit power of MBS $p_0(t)$ belongs to a discrete set $\{p_0^{min},...,p_0^{max}\}$~\cite{prab}. We ignore the thermal noise assuming that it is very small compared to the cross-tier interference. 

Similarly, the SINR at a macrocell user can be calculated as follows:
 \begin{equation}\label{MUsinr}
 	 \gamma_0(t) = \frac{p_0(t) g_{0,0}}{ I_0(t)+N_0},
 \end{equation}
where $I_0(t) = \sum\limits^{M}_{i=1}p_ig_{0,i}$ is the cross-tier interference from $M$ SBSs to the macrocell user, $g_{0,0}$ denotes the channel gain between the MBS  and its user, $g_{0,i}$ represents the channel gain between SBS $i$ and macrocell user, and $N_0$ is the thermal noise.

The channel gain $g_{i,j}$ is calculated based on  path-loss and fading gain as follows:
\begin{equation}
	g_{i,j} = |h|^2r_{i,j}^{-\alpha},
\end{equation}
where $r_{i,j}$ is the distance from BS $j$ to user served by BS $i$, $h$ follows a Rayleigh distribution, and $\alpha$ is the path-loss exponent. We assume that the $M$ SBSs are randomly located around the MBS and the users are uniformly distributed  within their coverage radii $r$. During a transmission interval (i.e., a time slot), only one user is served by each SBS in the downlink direction.

\section{Formulation and Analysis of the Single-Controller Stochastic Game}

A stochastic game  is a multiple stage game where it can have different states at each stage.  Each player chooses an action from a finite set of possible actions (which can be different at each stage). The players'
actions and the current state jointly determine the payoff to each player and the transition
probabilities to the succeeding state. The total payoff to a player is defined as the discounted sum of the stage payoffs or the limit inferior of the averages of the stage payoffs. The transition of the game at each time instant follows Markovian property, i.e., the current stage only depends on the previous one.  

\subsection{Utility Functions of MBS and SBSs}
In our model, the MBS and the SBSs try to maintain the average ${\rm SINR}$ of their users to be close to some targets. Note that a large target ${\rm SINR}$ means that a high transmit power will be required for the SBS which could be limited by the energy arrival rate and the battery size of the CES. Also, a higher transmit power means a higher level of  interference to other users. Similar to \cite{harpreet} and \cite{duy}, the utility function of the MBS at time $t$ is defined as:
\begin{equation}\label{u0}
\begin{split}
 U_0(p_0,Q,t)= -(p_0(t)\bar{g}_0-\lambda_0 (\bar{I}_0(t)+N_0))^2,
\end{split}
\end{equation}
where $\bar{I}_0(t)=\sum\limits^{M}_{i=1}p_i(t)\bar{g}_{0,i}$ is the average interference at the macrocell user at time $t$, and $\lambda_0$ is the target ${\rm SINR}$ for the macrocell user. Clearly, this utility function is maximized when the ${\rm SINR}$ at the MBS is $\lambda_0$.  If the ${\rm SINR}$ is larger than the target ${\rm SINR}$, this implies that MBS transmits with a larger power than necessary and thus wastes energy. On the other hand, if the ${\rm SINR}$ is smaller than the target ${\rm SINR}$, it implies that energy is not utilized effectively provided it has sufficient energy.

Similarly, the utility function of the CES is defined as follows:
\begin{equation}\label{u1}
	U_1(p_0,Q,t) = -\frac{1}{M}\sum\limits_{i=1}^{M}(p_i(t)\bar{g}_i- \lambda_1 \bar{I}_i(t))^2,
\end{equation}
where $\bar{I}_i(t)=	\sum\limits^{M}_{j \neq i}p_j(t)\bar{g}_{i,j} +p_0(t)\bar{g}_{i,0}$ is the average interference at the user served by SBS $i$ at time $t$. 
The arguments of both  the utility functions demonstrate that the action at time $t$ for the MBS is its {\em transmission power} $p_0(t)$ while the action of the CES is the {\em number of energy packets $Q(t)$ that is used  to transmit data from the SBSs}. Later, in {\bf Remark 2}, we will show that the interference and the transmit power of each SBS can be derived from $Q$ and $p_0$.
The conflict in the payoffs of both the players arises from their transmit powers that directly impact the cross-tier interference. 

Note that the proposed single-controller approach can be extended to consider a variety of utility functions (average throughput, total network throughput, energy efficiency, etc.)


\subsection{Formulation of the Game Model}
Unlike a traditional power control problem, the action space of the CES changes at each time slot and is limited by its battery size. Given the distribution of energy arrival  and the discount factor  $\beta$, the power control problem can be modeled by a single-controller discounted stochastic game as follows:
\begin{itemize}
\item There are two players: one MBS and one CES.
\item The state of the game is the battery level of the CES, which belongs to $\{0,...,S\}$. 
\item At time $t$ and state $s$, the action $p_0(t)$  of the MBS is its transmission power and belongs to the finite set  $\mathcal{P} = \{p_0^{min},...,p_0^{max}\}$. On the other hand, the action of the CES is $Q(t)$, which is the number of energy packets distributed to $M$ SBSs. $Q(t)$ belongs to the set $\{0,...,s\}$.

\item Let $\mathbf{m}$ and $\mathbf{n}$ denote the concatenated mixed-stationary-strategy vectors of the MBS and the CES, respectively. The vector $\mathbf{m}$ is constructed by concatenating $S+1$ sub-vectors into one big vector as $
				\mathbf{m}= \left[\mathbf{m}(0), \mathbf{m}(1), ..., \mathbf{m}(S)\right],$
		 in which each $\mathbf{m}(s)$ is a vector of probability  mass function for the actions of the MBS at state $s$. For example,
		  if the game is in state $s$, $\mathbf{m}(s,p)$ gives the probability that the MBS transmits with power $p$. Therefore, the full form of $\mathbf{m}$ will include the state $s$ and power $p$. However, to make the formulas simple, in the later parts of the paper, we will use $\mathbf{m}$ or $\mathbf{m}(s)$ to denote, respectively, the entire vector or a sub-vector at state $s$, respectively. 

\item Similarly, for the CES, $\mathbf{n}(s,i)$ gives the probability that the CES distributes $i$ energy packets. Note that the available actions of the CES  dynamically vary at each state whereas the available actions for the MBS remain unchanged at every state. 

\item {\bf Payoffs:} At state $s$, if the MBS transmits with power $p_0$ and the CES distributes $Q$ energy packets, the payoff function for the MBS is $U_0(p_0,Q)$ while the payoff function for the CES is $U_1(p_0,Q)$. We omit $t$ since $t$ does not directly appear in $U_1$ and $U_0$.  

\item {\bf Discounted Payoffs:} Denoting by $\beta$ the discount factor ($\beta <1$), the discounted sum of payoffs of the MBS is given as:
\begin{equation}
	\phi_0(s,\mathbf{m},\mathbf{n}) =\lim\limits_{T \rightarrow \infty} \sum\limits^{T}_{t=1} \beta^t \mathbb{E}[U_0(\mathbf{m},\mathbf{n},t)],
\end{equation}
where $\mathbb{E}[U_0(\mathbf{m},\mathbf{n},t)]$ is the average utility of macrocell user  at time $t$ if the MBS and the CES are using strategy $\mathbf{m}$ and $\mathbf{n}$, respectively. Similarly, we define the discounted sum of payoffs $\phi_1$ at the CES. In \cite[Chapter~2]{filar3}, it was proven that the limit of $\phi_0$ and $\phi_1$ always exist when $T \rightarrow \infty$. 

\item {\bf Objective:} To find a pair of strategies $(\mathbf{m}^*,\mathbf{n}^*)$ such that $\phi_0$ and $\phi_1$ become a Nash equilibrium, i.e., $
	\phi_0(s,\mathbf{m}^*,\mathbf{n}^*) \geq \phi_0(s,\mathbf{m}^*,\mathbf{n}) ~~~\forall \mathbf{n} \in \mathcal{N} ~~~
	\text{and}~~~ \phi_1(s,\mathbf{m}^*,\mathbf{n}^*) \geq \phi_1(s,\mathbf{m},\mathbf{n}^*) ~~~\forall \mathbf{m} \in \mathcal{M}
$,
where $\mathcal{M}$ and $\mathcal{N}$ are the sets of strategies of MBS and CES, respectively.
\end{itemize}	
Given the distribution of energy arrival at the CES, the transition probability of the system from state $s$ to state $s'$ under action $Q$ ($0 \leq Q \leq s$) of the CES is given as follows:
	\begin{align}\label{markovState}
					q(s'|s,Q) =
					\begin{cases}
						\mathrm{Pr}(\varphi=s'-(s-Q)), ~~~ &\text{if $s' < S$} \\
						 1-\sum\limits_{X=0}^{S-s}\mathrm{Pr}(\varphi = X) ,  ~~~&\text{otherwise.}
					\end{cases} 
				\end{align} 
Also, we assume that  information about the average channel gains are available to all players. This implies that the single-controller stochastic game we present here will be a perfect information non-cooperative game.

The states of the game can be described by a Markov chain for which the transition probabilities are defined by (\ref{markovState}). Clearly, the CES controls the state of the game while the MBS has no direct influence. Therefore, the single-controller stochastic game can be applied to derive the Nash equilibrium strategies for both the MBS and the CES.

 The two main steps to find the Nash equilibrium strategies are: 
\begin{itemize}
\item First, we build the payoff matrices for the MBS and the CES for every state $s$, where $S \geq s \geq 0$. Denote them by $R_0$ and $R_1$, respectively. 
\item Second, using these matrices, we solve a quadratic programming problem to obtain the Nash equilibrium strategies for both the MBS and the CES.
\end{itemize}

\subsection{Calculation of the Payoff Matrices}

To build $R_0$ and $R_1$, we  calculate $U_0$ and $U_1$ for every possible pair $(p_0,Q)$, where $p_0 \in \mathcal{P}$ and $0 \leq s \leq S$. In this regard,  we first derive the average channel gain $\bar{g}_{i,j}$. Second, from the energy consumed $Q$ and transmission power $p_0$ of the CES and the MBS, 
\begin{figure}
\centering
\includegraphics[width=2in]{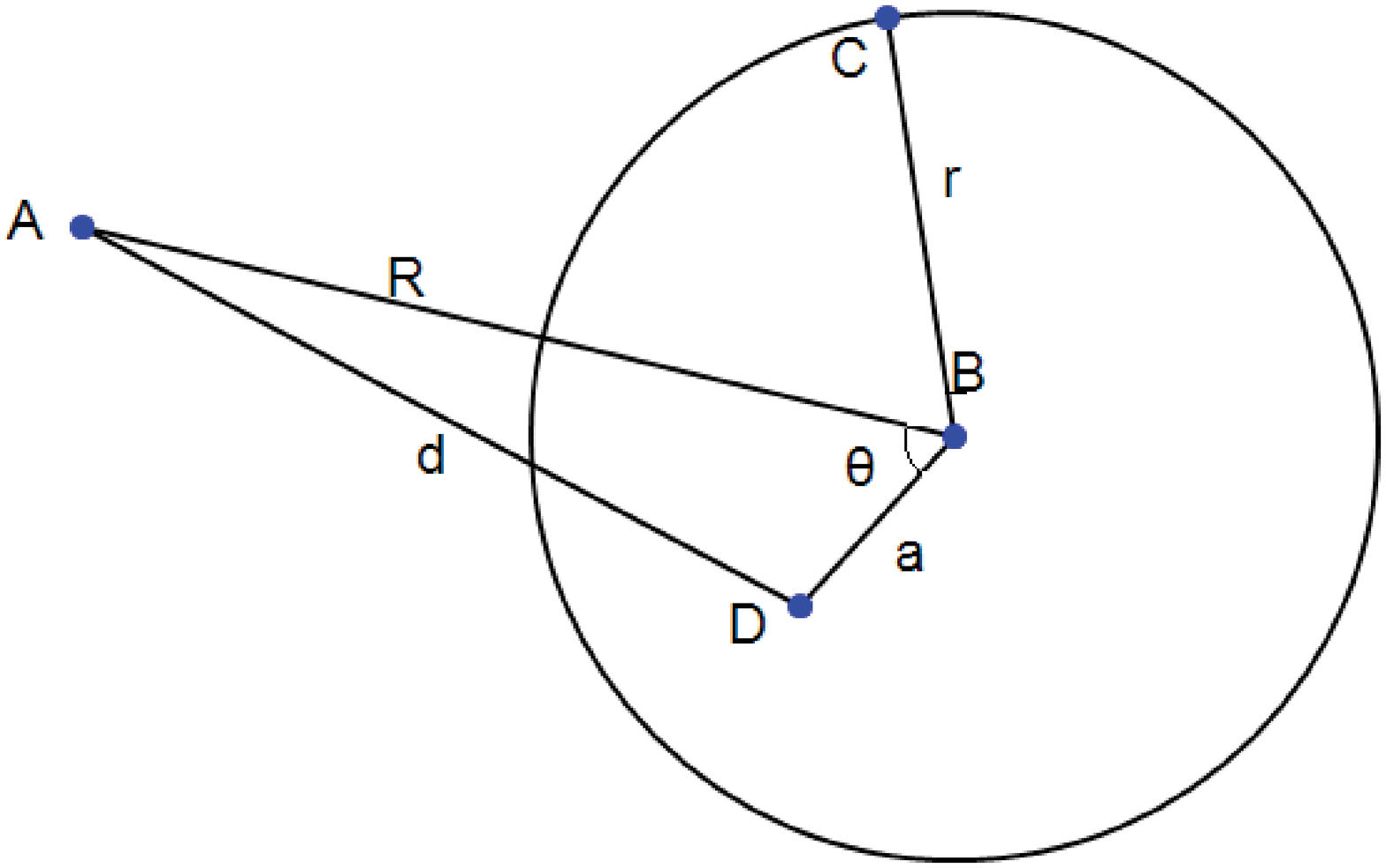}
\caption{Graphical illustration of the two BSs $A,$ $B$, and the user $D$ located within the disk centred at $B$.}
\label{locations}
\end{figure}
respectively, we decide how the CES distributes this energy $Q$ among the SBSs. Then, we calculate the transmit power at each SBS and obtain $U_0$ and $U_1$. The next two remarks provide us with the methods to calculate $U_0$ and $U_1$.
\begin{myrem}\label{rem1}
Given two BSs $A$ and $B$, assume that a user $D$, who is associated with $B$, is uniformly located within the circle centred at $B$ with radius $r$ (Fig.~\ref{locations}). Assume that $A$ does not lie on the circumference of the circle centred at $B$ and $\alpha= 4$. Denote $AB=R$ and $AD=d$, then the expected value of $d^{-4}$, i.e., $\mathbb{E}[d^{-4}]$ is 
$\frac{1}{(R^2-r^2)^2}$. If $A \equiv B$, then  $\mathbb{E}[d^{-4}] = \frac{1-r^{-2}}{r^2}$ given that $r \geq BD \geq 1$.  For other values of $\alpha$, $\mathbb{E}[r_{ij}^{-\alpha}]$ can be easily computed numerically using tools such as \texttt{MATHEMATICA}.
\end{myrem}

\begin{proof}
See {\bf Appendix A.}
\end{proof}
Recalling that $g_{i,j} = |h|^2 r_{ij}^{-4}$ and  that the fading and path-loss are independent, we have $\bar{g}_{i,j} = \mathbb{E}[h^2]\mathbb{E}[r_{ij}^{-4}]$, where $\mathbb{E}[h^2] =\lambda$, if $h$  follows Rayleigh distribution with scale parameter $\lambda$.
Next, we need to find how the CES distributes its energy to each SBS such that $U_1$ is maximized. 

\begin{myrem} (Optimal energy distribution at the CES)\label{rem2}
If at time $t$ the CES distributes $Q$ energy packets to $M$ SBSs and the MBS transmits with power $p_0$, then the transmit powers $(p_1,p_2,...,p_M)$ at the $M$ SBSs are the solutions of the following optimization problem ($t$ is omitted for brevity):	
\begin{equation}\label{dist}	
	\begin{split}	\max\limits_{p_1,p_2,...,p_M}&-\frac{1}{M}\sum\limits_{i=1}^{M}\left(p_i\bar{g}_i- \lambda_1(\sum\limits^{M}_{j \neq i}p_j\bar{g}_{ij} +p_0\bar{g}_{i,0})\right)^2, \\
		\text{s.t.}~~ & \sum\limits^{M}_{i=1} p_i = \frac{K}{\Delta T} Q, \\
		& P_{max} \geq p_i \geq 0, ~~~~ \forall i = 1,2, ..., M,		
	\end{split}
	\end{equation}
where $P_{max}$ is the maximum transmit power of each SBS. Since this problem is strictly concave, the solution $(p_1, ..., p_M)$ always exists and is unique for each pair ($Q,p_0$). Thus, for each pair $(Q,p_0)$, where $Q \in \{0,...,S\}$ and $p_0 \in \{p_0^{min},...,p_0^{max}\}$, we have unique values for $U_0(p_0,Q)$ and $U_1(p_0,Q)$. 
\end{myrem}

Based on the remarks above, for each combination of $Q$ and $p_0$, we can find the unique payoff $U_0$ and $U_1$ of MBS and CES. Since $Q$ and $p_0$ belongs to discrete sets we can find the payoff for all of the possible combinations between them. Thus, we can build the pay-off matrix $R_0$ for the MBS and $R_1$ for the CES. The matrix $R_0$ has the form of a block-diagonal matrix $\mathrm{diag}(R_0^0,...,R_0^S)$, where each sub-matrix $R_0^s = (U_0(p_0,j))^{\mathcal{P}\times\{0,...,s\}}$, with $p_0 \in \mathcal{P}$  and $j \in \{0,...,s\}$ is the matrix of all possible payoffs for the MBS at state $s$. Similarly, we can build $R_1$, which is the payoff matrix for the CES. A detailed explanation on how we use them will be given in the next subsection.

\subsection{Derivation of the Nash Equilibrium}

If we know the strategy $\mathbf{m}_0$ of the MBS, the discount factor $\beta$, and the probability $\pi_s$ that the CES starts with $s$ energy packets in the battery, then the stochastic game is reduced to a simple MDP problem with only one player, the CES. For this case, denote the CES's best response strategy to $\mathbf{m}_0$ by $\mathbf{n}$. Then the CES's value function $\phi_1(s,\mathbf{m}_0,\mathbf{n})$, where $s=0, ..., S$, is the solution of the following MDP problem \cite[Chapter~ 2]{filar3}:		  
\begin{equation}\label{P}
 \begin{split}
	&\min\limits_{\phi_1} \sum\limits_{s=0}^{S}\pi_s \phi_1(s,\mathbf{m}_0,\mathbf{n}),\quad\quad \\
	\mbox{s.t.}\quad \phi_1(s,\mathbf{m}_0,&\mathbf{n}) \geq r_1(s,\mathbf{m}_0,j) + \beta \sum\limits_{s'=0}^{S}q(s'|s,j)\phi_1(s',\mathbf{m}_0,\mathbf{n}), \\ & \forall s,j,~~~ 0 \leq j \leq s~~ \text{and} ~~ 0 \leq s \leq S, 					  
			  \end{split}
 \end{equation}	
with $r_1(s,\mathbf{m}_0,j) = \sum_{p_0 \in \mathcal{P}}  U_1(p_0,j)\mathbf{m}_0(s,p_0)$ is the average payoff for the CES at state $s$ when it consumes $j$ quanta of energy. Using the Dirac function $\delta$, the dual problem can be expressed as
		\begin{equation}\label{D}
			\begin{split}
				\max\limits_{x} &\sum\limits_{s=0}^{S}\sum\limits_{j=0}^{s} r_1(s,\mathbf{m}_0,j)x_{s,j},\\
			\mbox{s.t.}~~	\sum\limits_{s=0}^{S} \sum\limits_{j=0}^{s}[\delta(s-s')-
				\beta& q(s'|s,j)]x_{s,j} = \pi_{s'}, ~~~ \forall 0 \leq s' \leq S, \\
				x_{s,j} \geq 0 ~~ \forall s,j,&~~~ 0 \leq j \leq s~~ \text{and} ~~ 0 \leq s \leq S, 
			\end{split}
		\end{equation}
where $\delta(s)=1$ if $s=0$ and $\delta(s)=0$, otherwise. 

By solving the pair of linear programs above, the probability that the SBS chooses action $j$ at state $s$ can be found as 
$
\mathbf{n}(s,j) = \frac{x_{s,j}}{ \sum_{j=0}^{s} x_{s,j}}.
$ Using some algebraic manipulations, we can convert the optimization problem in (\ref{P}) into a matrix form as: 		 
\begin{align*}
\min\limits_{\mathbf{\phi}_1} &~~ \mathbf{\pi}^{\text{T}}\mathbf{\phi}_1, \\
\text{s.t.} & ~~H\mathbf{\phi}_1 \geq R_1^{\text{T}}\mathbf{m}_0 \tag{P},
\end{align*}
and its dual as 		 
\begin{align*}
\max\limits_{\mathbf{x}} ~~ &\mathbf{m}_0^{\text{T}}R_1\mathbf{x},  \\
 \text{s.t.}  ~~&
\mathbf{x}^{\text{T}} H = \pi^{\text{T}}, \\
&\mathbf{x} \geq 0,		 		 \tag{D}
\end{align*}	 
where $R_1$ is the payoff matrices of the CES. Combining the primal and dual linear programs (i.e.,  (P) and ({D}) above) and using the same notations, we have the following theorems.
		
\begin{mythe}[Nash equilibrium strategies~\cite{Filar1}]  If the state space and the action space are finite and discrete, and the transition probabilities are controlled only by player 2 (i.e., the CES), then there always exists a Nash equilibrium point $(\mathbf{m},\mathbf{n})$ for this stochastic game. Moreover, a pair $(\mathbf{m},\mathbf{n})$  is a Nash equilibrium point of a general-sum single-controller discounted stochastic game if and only if it is an optimal solution of a (bilinear) quadratic program given by  		
			\begin{equation} \label{quadr}
			\begin{split}
					\max\limits_{\mathbf{m},\mathbf{x},\mathbf{\phi},\mathbf{\xi}} ~~ &[\mathbf{m}(R_0+R_1)\mathbf{x}-\mathbf{\pi}^{\text{T}} \mathbf{\phi}_1 - \mathbf{1}^{\text{T}}\mathbf{\xi}], \\
					\text {s.t. }~~ &  H\mathbf{\phi}_1 \geq R_1^{\text{T}}\mathbf{m}, \\
					& \mathbf{x}^{\text{T}}H = \mathbf{\pi}^{\text{T}}, \\				
					& R_0^s\mathbf{x}(s) \leq \xi_s \mathbf{1}, ~~~\forall s = 0,..., S,\\
					& \mathbf{m}(s)^{\text{T}}\pmb{1} = 1, ~~~\forall s = 0,..., S,\\			 
					& \mathbf{m,x} \geq \mathbf{0},	
					\end{split}		
				\end{equation}
		where $\xi_s$ is the maximum average payoff of the MBS at state $s$. The sub-vector strategy $\mathbf{n}(s)$ of the CES at state $s$ is calculated from $\mathbf{x}$ as: 
		\begin{align}
			\mathbf{n}(s) =
				\frac{\mathbf{x}(s)}{ \mathbf{x}(s)^{\text{T}}\mathbf{1}}. 	
		\end{align}		
	\end{mythe}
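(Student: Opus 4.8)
The plan is to prove the two assertions — existence and the quadratic-program characterization — by exploiting the defining single-controller structure: the CES alone governs the transition law \eqref{markovState}, so the two players' best-response problems have very different shapes, and it is this asymmetry that collapses the equilibrium conditions into a single bilinear program.

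First I would treat the two best-response maps separately. Fixing the MBS mixed strategy $\mathbf{m}_0$, the CES faces a genuine discounted MDP, and by the standard MDP linear-programming duality (already invoked for \eqref{P}--\eqref{D}) its value vector $\boldsymbol{\phi}_1$ is optimal if and only if it is primal-feasible, $H\boldsymbol{\phi}_1 \geq R_1^{\text{T}}\mathbf{m}_0$, and an occupation measure $\mathbf{x}\geq 0$ with $\mathbf{x}^{\text{T}}H=\boldsymbol{\pi}^{\text{T}}$ satisfies complementary slackness $\mathbf{m}_0^{\text{T}}R_1\mathbf{x}=\boldsymbol{\pi}^{\text{T}}\boldsymbol{\phi}_1$; the CES strategy is then recovered by $\mathbf{n}(s)=\mathbf{x}(s)/(\mathbf{x}(s)^{\text{T}}\mathbf{1})$. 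Conversely, because the MBS does not influence \eqref{markovState} at all, given any CES strategy the MBS cannot shift the induced state occupation, so its discounted payoff decouples across states and its best response reduces to maximizing the expected stage payoff state-by-state. This is exactly encoded by the scalars $\xi_s$ together with $R_0^s\mathbf{x}(s)\leq \xi_s\mathbf{1}$ and $\mathbf{m}(s)^{\text{T}}\mathbf{1}=1$: the MBS is optimal at state $s$ iff $\mathbf{m}(s)$ is supported on the rows attaining $\xi_s$, i.e. iff $\mathbf{m}(s)^{\text{T}}R_0^s\mathbf{x}(s)=\xi_s$.

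Next I would assemble these into the weak-duality inequality that drives \eqref{quadr}. For any feasible $(\mathbf{m},\mathbf{x},\boldsymbol{\phi}_1,\boldsymbol{\xi})$, multiplying $H\boldsymbol{\phi}_1\geq R_1^{\text{T}}\mathbf{m}$ on the left by $\mathbf{x}^{\text{T}}\geq 0$ and using $\mathbf{x}^{\text{T}}H=\boldsymbol{\pi}^{\text{T}}$ gives $\mathbf{m}R_1\mathbf{x}-\boldsymbol{\pi}^{\text{T}}\boldsymbol{\phi}_1\leq 0$; multiplying $R_0^s\mathbf{x}(s)\leq\xi_s\mathbf{1}$ by $\mathbf{m}(s)^{\text{T}}\geq 0$ and summing over $s$ gives $\mathbf{m}R_0\mathbf{x}-\mathbf{1}^{\text{T}}\boldsymbol{\xi}\leq 0$. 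Hence the objective of \eqref{quadr} is nonpositive on the entire feasible set, and it equals zero precisely when \emph{both} inequalities hold with equality — which are exactly the two complementary-slackness/best-response conditions isolated above. This yields the ``if and only if'': a feasible point is a maximizer with value $0$ iff the CES value function is optimal against $\mathbf{m}$ and the MBS is optimal against $\mathbf{n}$, i.e. iff $(\mathbf{m},\mathbf{n})$ is a Nash equilibrium.

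Finally, for existence: since $\beta<1$ and the state and action sets are finite, a stationary Nash equilibrium of the discounted game exists by the standard fixed-point argument applied to the LP-derived best-response correspondences, which are upper hemicontinuous with convex compact values; the equilibrium pair then furnishes a feasible point of \eqref{quadr} attaining objective value $0$, so the program is solvable and its optimum coincides with the equilibrium. The main obstacle I anticipate is not the weak-duality arithmetic but the careful justification that the MBS best response genuinely decouples state-by-state — this hinges entirely on the single-controller property, since if the MBS could perturb \eqref{markovState} the per-state scalars $\xi_s$ would no longer capture optimality and the objective would cease to be the sum of two independent complementary-slackness gaps. Rigorously establishing the MDP-LP duality and the occupation-measure interpretation of $\mathbf{x}$ is the other technical point requiring care.
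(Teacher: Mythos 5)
The paper gives no proof of this theorem at all: it is quoted directly from Filar~\cite{Filar1}, so the only meaningful comparison is against that reference's argument, which your proposal reconstructs essentially verbatim — discounted-MDP linear-programming duality for the CES, state-by-state decoupling of the MBS best response (valid precisely because of the single-controller property), weak duality forcing the bilinear objective to be nonpositive with equality exactly at the two complementary-slackness conditions, and a Fink/Takahashi-type fixed-point argument supplying existence and hence a feasible point of value~$0$. Your argument is correct and is the same argument as the cited source; the only routine caveat worth recording is that recovering $\mathbf{n}(s)=\mathbf{x}(s)/(\mathbf{x}(s)^{\text{T}}\mathbf{1})$ and the per-state (rather than $\pi$-averaged) optimality of both players require $\pi_s>0$ for every $s$, so that $\mathbf{x}(s)^{\text{T}}\mathbf{1}\geq\pi_s>0$ — an assumption implicit in the theorem statement and in your proof alike.
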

		
We can define different utility functions for the MBS and the SBS and apply the same method to achieve the Nash equilibrium. As long as the number of states is finite and the transition probabilities and the payoff matrices are known, a Nash equilibrium point always exists.

\begin{mythe} [Best response strategy for the MBS] \label{mbs} Given a stationary strategy $\mathbf{n}$ of the CES, there exists a pure stationary strategy $\mathbf{m}$ as the best response for the MBS. Similarly, for any stationary strategy $\mathbf{m}$ of the MBS, there exists a pure stationary best response $\mathbf{n}$ of the CES.
\end{mythe}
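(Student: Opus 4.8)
The plan is to observe that each of the two claims is an instance of the same principle: when one player fixes a stationary strategy, the other player is left facing a single-agent discounted Markov decision process (MDP), and a finite discounted MDP always admits a pure (deterministic) stationary optimal policy. The two halves differ only in \emph{who} controls the transitions, which makes one direction essentially trivial and the other the substantive case.

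For the first claim, fix a stationary strategy $\mathbf{n}$ of the CES. Because the game is single-controller with transitions governed solely by the CES through (\ref{markovState}), fixing $\mathbf{n}$ fixes the entire state process as a Markov chain that is completely independent of the MBS's actions. Consequently the distribution over states at every time $t$ is predetermined, and in $\phi_0(s,\mathbf{m},\mathbf{n})=\lim_{T\to\infty}\sum_{t=1}^{T}\beta^t\,\mathbb{E}[U_0(\mathbf{m},\mathbf{n},t)]$ the only quantity the MBS can influence is its per-state expected immediate payoff $\sum_{j} U_0(p_0,j)\,\mathbf{n}(s,j)$. First I would therefore maximize this quantity state-by-state over the finite set $\mathcal{P}$; since the maximum over a finite set is attained at a single power level $p_0^\star(s)$, the induced pure stationary strategy $\mathbf{m}$ is a best response. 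This direction is easy precisely because the MBS is the non-controlling player.

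For the second claim, fix a stationary strategy $\mathbf{m}$ of the MBS. Now the CES genuinely controls the transitions, so it faces a bona fide discounted MDP with finite state space $\{0,\dots,S\}$, state-dependent finite action sets $\{0,\dots,s\}$, reward $r_1(s,\mathbf{m},j)$, transition kernel (\ref{markovState}), and discount $\beta<1$. My approach would be to introduce the Bellman optimality operator
\begin{equation*}
(T\phi)(s)=\max_{0\le j\le s}\Big[\,r_1(s,\mathbf{m},j)+\beta\sum_{s'=0}^{S} q(s'\mid s,j)\,\phi(s')\,\Big],
\end{equation*}
show that $T$ is a $\beta$-contraction in the sup-norm --- using $\beta<1$ together with the fact that $q(\cdot\mid s,j)$ is a probability distribution so that $\sum_{s'} q(s'\mid s,j)=1$ --- and invoke the Banach fixed-point theorem to obtain a unique fixed point $\phi_1^\star$. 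Selecting, at each state $s$, an action $j^\star(s)$ attaining the maximum (possible because the action set is finite) defines a pure stationary policy; a standard verification argument shows its value coincides with $\phi_1^\star$, so it is optimal, i.e.\ a best response. Alternatively, one can argue directly from the dual program (\ref{D}): it has only $S+1$ equality constraints, so a basic optimal solution $\mathbf{x}$ has at most $S+1$ positive components, and once one checks that every state carries positive total flow $\sum_{j} x_{s,j}>0$, exactly one $x_{s,j}$ is positive per state, whence $\mathbf{n}(s)=\mathbf{x}(s)/(\mathbf{x}(s)^{\mathrm{T}}\mathbf{1})$ places all its mass on a single action.

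I expect the main obstacle to lie entirely in this second case. The contraction estimate and the existence of the fixed point are routine, but the load-bearing step is the optimality verification --- confirming that the greedy selector $j^\star$ actually attains the fixed-point value and dominates every other stationary (in principle randomized) policy --- or, on the LP route, the argument that positive occupation flow reaches every state so that each sub-vector $\mathbf{n}(s)$ is well-defined and reduces to a pure action. A secondary technical point I would keep in mind is that the CES's action set $\{0,\dots,s\}$ varies with the state $s$; this does not disturb finiteness or the contraction property, but it must be respected both in the maximization and in the basic-solution counting.
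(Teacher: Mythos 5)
Your proposal is correct and takes essentially the same approach as the paper: for the MBS you use the same key observation that in a single-controller game the state process is unaffected by the MBS, so state-by-state maximization of the expected immediate payoff over the finite set $\mathcal{P}$ is attained at a pure action; for the CES you make the same reduction to a finite discounted MDP. The only difference is that the paper simply cites the standard result (Filar and Vrieze, Chapter~2) that such an MDP admits a pure stationary optimal policy, whereas you supply its textbook proof via the Bellman contraction argument.
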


\begin{proof}
 See \textbf{Appendix B}. 
 \end{proof}

Because for every mixed strategy of the CES, the MBS can find a pure stationary strategy as a best response, we only need to find Nash equilibrium where the strategy of the MBS is deterministic. From there, this problem can be converted to a mixed-integer program with $\mathbf{m}$ as a vector of 0 and 1. We can use a brute-force search to obtain an equilibrium point. For each feasible integer value of $\mathbf{m}$ we insert it into (\ref{quadr}) to obtain $\mathbf{n}$. If the objective is zero, then $(\mathbf{m},\mathbf{n})$ is the equilibrium point. This theorem implies that the optimization problem in (\ref{quadr}) can be solved in a finite amount of time. 

Notice that since $R_0+R_1$ is not a positive semi-definite matrix, there can be several solutions for the quadratic programming problem in (\ref{quadr}), i.e., multiple Nash equilibrium. Therefore, to make the chosen Nash equilibrium point more meaningful, we use the following lemma from \cite{Filar1}.  

\begin{mylem}(Necessary and sufficient conditions for the Nash equilibrium)
$\mathbf{m}$ and $\mathbf{n}$ constitute a pair of Nash equilibrium policies for the MBS and the CES if and only if 
\begin{equation}
\mathbf{m}(R_0+R_1)\mathbf{x}-\mathbf{\pi}^{\text{T}} \mathbf{\phi}_1 - \mathbf{1}^{\text{T}}\mathbf{\xi} = 0. \label{equiCondition}
\end{equation}
\end{mylem}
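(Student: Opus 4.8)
The plan is to lean on Theorem~1, which already identifies the Nash equilibria of the game with the optimal solutions of the bilinear program~(\ref{quadr}). Since the objective of~(\ref{quadr}) is \emph{literally} the left-hand side of~(\ref{equiCondition}), the lemma will follow once I establish that the optimal value of~(\ref{quadr}) equals zero. Indeed, with the companion variables $\mathbf{x}, \mathbf{\phi}_1, \mathbf{\xi}$ understood to be the quantities associated with $(\mathbf{m},\mathbf{n})$, a feasible tuple is optimal (equivalently, by Theorem~1, $(\mathbf{m},\mathbf{n})$ is a Nash equilibrium) precisely when its objective attains that value, i.e.\ equals zero.

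First I would prove that the objective is nonpositive at every feasible point, splitting it into an $R_1$-part and an $R_0$-part. For the $R_1$-part, multiply the constraint $H\mathbf{\phi}_1 \ge R_1^{\text{T}}\mathbf{m}$ on the left by $\mathbf{x}^{\text{T}}\ge \mathbf{0}$ and substitute $\mathbf{x}^{\text{T}}H=\mathbf{\pi}^{\text{T}}$ to get $\mathbf{\pi}^{\text{T}}\mathbf{\phi}_1 \ge \mathbf{m}R_1\mathbf{x}$, hence $\mathbf{m}R_1\mathbf{x}-\mathbf{\pi}^{\text{T}}\mathbf{\phi}_1\le 0$. For the $R_0$-part, I use that $R_0=\mathrm{diag}(R_0^0,\dots,R_0^S)$ is block-diagonal, so $\mathbf{m}R_0\mathbf{x}=\sum_{s}\mathbf{m}(s)^{\text{T}}R_0^s\mathbf{x}(s)$; multiplying $R_0^s\mathbf{x}(s)\le \xi_s\mathbf{1}$ on the left by $\mathbf{m}(s)^{\text{T}}\ge \mathbf{0}$ and using $\mathbf{m}(s)^{\text{T}}\mathbf{1}=1$ yields $\mathbf{m}(s)^{\text{T}}R_0^s\mathbf{x}(s)\le \xi_s$, and summing over $s$ gives $\mathbf{m}R_0\mathbf{x}-\mathbf{1}^{\text{T}}\mathbf{\xi}\le 0$. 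Adding the two parts shows the objective is $\le 0$, so the optimal value is $\le 0$.

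Next I would show the value zero is attained, which simultaneously proves attainability and fixes the meaning of the two slack terms. Existence of an equilibrium (the first half of Theorem~1) guarantees an optimal tuple exists; to see that its objective is zero I argue each sub-inequality above is tight at an equilibrium. The $R_1$-inequality is tight exactly when the complementary-slackness relation $\mathbf{x}^{\text{T}}(H\mathbf{\phi}_1-R_1^{\text{T}}\mathbf{m})=0$ holds, which by strong duality for the CES's best-response linear programs~(\ref{P})--(\ref{D}) is equivalent to $\mathbf{n}$ (recovered via $\mathbf{n}(s)=\mathbf{x}(s)/\mathbf{x}(s)^{\text{T}}\mathbf{1}$) being a best response to $\mathbf{m}$. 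The $R_0$-inequality is tight exactly when, at each state $s$, $\mathbf{m}(s)$ places mass only on powers attaining the per-state maximum payoff $\xi_s$, i.e.\ when $\mathbf{m}$ is a best response to $\mathbf{n}$. Hence at any equilibrium both inequalities are equalities and the objective vanishes, so the optimal value is exactly zero.

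Finally I would assemble the equivalence: by the nonpositivity bound together with attainability, the optimal value of~(\ref{quadr}) is zero, so a feasible tuple is optimal if and only if its objective equals zero; combining this with Theorem~1's identification of optimal tuples with Nash equilibria delivers~(\ref{equiCondition}). I expect the main obstacle to be the second step---pinning the value at exactly zero---because it requires translating the two tightness conditions into genuine best-response statements via the LP duality of the embedded single-controller MDP and the per-state maximality encoded by $\mathbf{\xi}$, rather than any further matrix algebra. The nonpositivity bound itself is routine once the transpose and block-diagonal bookkeeping is handled carefully.
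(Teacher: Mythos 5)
Your proof is correct, but note that there is no in-paper proof to compare it against: the paper states this lemma without proof, importing it directly from \cite{Filar1} (the same reference behind Theorem~1). What you have done is reconstruct the argument underlying that citation, and it is essentially the standard one. Your two-part bound is exactly right: pairing $\mathbf{x}^{\text{T}}H=\mathbf{\pi}^{\text{T}}$ with $H\mathbf{\phi}_1\geq R_1^{\text{T}}\mathbf{m}$ gives $\mathbf{m}R_1\mathbf{x}\leq\mathbf{\pi}^{\text{T}}\mathbf{\phi}_1$, and pairing $R_0^s\mathbf{x}(s)\leq\xi_s\mathbf{1}$ with the block-diagonal structure of $R_0$ and $\mathbf{m}(s)^{\text{T}}\mathbf{1}=1$ gives $\mathbf{m}R_0\mathbf{x}\leq\mathbf{1}^{\text{T}}\mathbf{\xi}$, so the objective of (\ref{quadr}) is nonpositive on the feasible set; tightness at an equilibrium follows, as you say, from strong duality of the CES's best-response pair (\ref{P})--(\ref{D}) (equivalently, complementary slackness $\mathbf{x}^{\text{T}}(H\mathbf{\phi}_1-R_1^{\text{T}}\mathbf{m})=0$) and from the fact that a best-responding $\mathbf{m}(s)$ puts mass only on maximizers of $R_0^s\mathbf{x}(s)$. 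Two conventions that you invoke implicitly are worth making explicit. First, (\ref{equiCondition}) holds at an equilibrium only when $\xi_s$ is taken to be the \emph{minimal} feasible value $\max_{p_0}[R_0^s\mathbf{x}(s)]_{p_0}$, i.e., the maximum average MBS payoff at state $s$ as the paper defines it; for any larger feasible $\xi_s$ the objective is strictly negative even at an equilibrium, so the lemma is really a statement about $(\mathbf{m},\mathbf{n})$ together with their canonical companion variables. Second, recovering $\mathbf{n}(s)=\mathbf{x}(s)/\mathbf{x}(s)^{\text{T}}\mathbf{1}$ presumes $\mathbf{x}(s)^{\text{T}}\mathbf{1}>0$, which is guaranteed when $\pi_s>0$ for all $s$; the paper glosses over this as well, so it is not a gap relative to the paper's level of rigor. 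Neither point affects the validity of your argument.
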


Since $\mathbf{\pi}_s$ is the probability that the CES starts with $s$ energy level in the battery at starting time, from (\ref{P}), $\mathbf{\pi}^{\text{T}}\phi_1$ is the average value function of the CES with respect to the energy arrival rate and the strategies $\mathbf{m},\mathbf{n}$. We are interested in the Nash equilibrium that maximizes the average payoff of the CES (i.e., the SBSs). This bias towards the SBSs is crucial as the available energy of the CES is limited by the randomness of the energy arrival process and thus the SBSs are more likely to suffer  when compared to the MBS.  Using the lemma above, we change the problem in (\ref{quadr}) to a quadratic-constrained quadratic programming (QCQP) as stated below. 

\begin{mypro}(Nash equilibriums that favor the SBSs)
The Nash equilibrium $(\mathbf{m},\mathbf{n})$ that has the best payoff for the CES is a solution of the following QCQP problem:
\begin{equation}\label{QCQP}
			\begin{split}
				&\max\limits_{\mathbf{m},\mathbf{x},\mathbf{\phi},\mathbf{\xi}} \mathbf{\pi}^{\text{T}} \mathbf{\phi}_1, \\
				\text{s.t.}~~~ & \mathbf{m}(R_0+R_1)\mathbf{x}-\mathbf{\pi}^{\text{T}} \mathbf{\phi}_1 - \mathbf{1}^{\text{T}}\mathbf{\xi}=0, \\
				&\text{all constraints from (\ref{quadr}). }
			\end{split}
		\end{equation}
 \end{mypro}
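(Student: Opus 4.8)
The plan is to prove the proposition by showing that the feasible region of the QCQP in (\ref{QCQP}) coincides \emph{exactly} with the set of Nash equilibria of the game, so that maximizing the CES payoff $\mathbf{\pi}^{\text{T}}\mathbf{\phi}_1$ over that region automatically returns the equilibrium most favorable to the SBSs. First I would invoke \textbf{Theorem 1}: a tuple $(\mathbf{m},\mathbf{x},\mathbf{\phi}_1,\mathbf{\xi})$, with $\mathbf{n}$ recovered via $\mathbf{n}(s)=\mathbf{x}(s)/(\mathbf{x}(s)^{\text{T}}\mathbf{1})$, yields a Nash equilibrium if and only if it is an \emph{optimal} solution of the bilinear program (\ref{quadr}). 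This already shows that the constraint set inherited from (\ref{quadr}) contains every equilibrium, but it also contains non-equilibrium feasible points whose quadratic objective is strictly below the optimum; hence the constraints of (\ref{quadr}) alone are not enough to isolate the equilibria.

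The key step is to pin down the optimal value of (\ref{quadr}) and use it to carve out precisely the equilibria. By \textbf{Lemma 1}, a feasible tuple is a Nash equilibrium if and only if the equality $\mathbf{m}(R_0+R_1)\mathbf{x}-\mathbf{\pi}^{\text{T}}\mathbf{\phi}_1-\mathbf{1}^{\text{T}}\mathbf{\xi}=0$ holds, i.e., the objective of (\ref{quadr}) attains its maximal value of zero. Consequently, appending the single equality (\ref{equiCondition}) to all the constraints of (\ref{quadr}) produces a feasible region that equals the set of Nash equilibria: every feasible point of (\ref{QCQP}) is an equilibrium by Lemma 1, and conversely every equilibrium satisfies the constraints of (\ref{quadr}) and, again by Lemma 1, the added equality. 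This is the crux of the argument, since it converts the implicit condition ``being an optimal solution of (\ref{quadr})'' into an explicit algebraic constraint that can be imposed directly.

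It then remains only to identify the objective. At any point of this feasible region the CES plays $\mathbf{n}$, and, by the MDP characterization in (\ref{P})--(\ref{D}) together with LP duality between (P) and (D), the variable $\mathbf{\phi}_1$ coincides with the CES value function under $(\mathbf{m},\mathbf{n})$; hence $\mathbf{\pi}^{\text{T}}\mathbf{\phi}_1=\sum_{s}\pi_s\,\phi_1(s,\mathbf{m},\mathbf{n})$ is exactly the CES's discounted payoff averaged over the initial battery distribution $\mathbf{\pi}$. Maximizing $\mathbf{\pi}^{\text{T}}\mathbf{\phi}_1$ over the feasible region of (\ref{QCQP}) therefore selects, among all Nash equilibria, the one that maximizes the CES (and hence SBS) payoff, which is the assertion.

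I expect the main obstacle to be the rigorous justification of the middle step, namely that (\ref{equiCondition}) neither admits spurious non-equilibrium points nor excludes any genuine one. The cleanest way to secure this is to decompose the objective of (\ref{quadr}) as $[\mathbf{m}R_0\mathbf{x}-\mathbf{1}^{\text{T}}\mathbf{\xi}]+[\mathbf{m}R_1\mathbf{x}-\mathbf{\pi}^{\text{T}}\mathbf{\phi}_1]$ and show that, on the feasible set, the first bracket is non-positive because $R_0^s\mathbf{x}(s)\le \xi_s\mathbf{1}$ forces $\mathbf{m}(s)^{\text{T}}R_0^s\mathbf{x}(s)\le\xi_s$, while the second is non-positive by weak LP duality between (P) and (D). The sum is thus bounded above by zero, with equality iff both brackets vanish --- the first expressing the MBS best-response condition and the second, via complementary slackness, forcing both CES optimality and $\mathbf{\phi}_1$ to equal the true value function. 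Establishing this two-sided equivalence is what guarantees that the objective of (\ref{QCQP}) genuinely represents the CES payoff rather than a mere bound.
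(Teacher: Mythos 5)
Your proposal is correct and follows essentially the same route as the paper: it invokes Lemma 1 to replace the implicit condition ``optimal solution of (\ref{quadr})'' with the explicit equality constraint (\ref{equiCondition}), so that the feasible set of (\ref{QCQP}) is exactly the set of Nash equilibria, and then interprets $\mathbf{\pi}^{\text{T}}\mathbf{\phi}_1$ via (\ref{P}) as the CES's average discounted payoff, so maximizing it selects the equilibrium most favorable to the SBSs. Your final decomposition argument (the two non-positive brackets, weak LP duality, and complementary slackness) is a re-derivation of the cited Lemma 1 from \cite{Filar1}, which the paper simply takes as given rather than proving.
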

By solving this QCQP (i.e., by using a brute-force search), we obtain a Nash equilibrium that returns the best average payoff for the CES.  Again, we can still have multiple Nash equilibrium in this case, but all of them must return the same payoff for the CES. Because there may be multiple solutions, the CES and the MBS need to exchange information so that they agree on the same Nash equilibrium.

\begin{algorithm}
 \caption{Nash equilibrium for the stochastic game}\label{StocAlgo}
     \begin{algorithmic}[1]     	  	
     	  		\State {The MBS and the CES build their reward matrices $R_0$ and $R_1$. For each possible pair of energy level and transmit power $(Q,p_0)$, the CES solves (\ref{power}) to obtain a unique tuple $(p_1,p_2,...,p_M)$ and record these results.}
     	  	  	\State The MBS and the CES calculate their strategy $\mathbf{m}$ and $\mathbf{n}$, respectively, by solving (\ref{QCQP}).
     	  	  	\State At time $t$, the CES sends its current battery level $s$ to  the MBS. It also randomly chooses an action $Q$ using the probability vector $\mathbf{n}(s)$. 
     	  	  	\State The MBS then randomly picks  power $p_0$ using distribution $\mathbf{m}(s)$ and sends it back to the CES. Based on $p_0$ and $Q$, the CES searches its records and retrieves the corresponding tuple $(p_1,...,p_M)$.  
     	  	  	\State The CES distributes energy $(p_1\Delta T,p_2\Delta T,...,p_M\Delta T)$, respectively, to the $M$ SBSs.	  	  	  	  	  		  	  	
     	  	\end{algorithmic}
     	\end{algorithm}
From \textbf{Theorem 2}, we know that there exists an equilibrium with pure stationary  strategies for both the MBS and CES. Recall that with pure strategy, the action of each player is a function of the state. Thus, if we can obtain this equilibrium, the CES can predict which transmit power $p_0$ the MBS will use based on the current state, without exchanging information with the MBS and vice versa.

\subsection{Implementation of the Discrete Stochastic Game}

For the discrete stochastic control game with the CES,  each SBS first needs to send its location and average fading channel information $\mathbb{E}[h^2]$ of its user to the CES. Then the CES and the MBS will exchange information so that the MBS can have complete knowledge about the average channel gains at each SBS. Since we only use average value, the CES and the MBS only need to re-calculate the Nash equilibrium strategies when either the locations of SBSs change, e.g., some SBSs go off and some are turned on, or when the average of channel fading gain $h$ changes, or when distribution of energy arrival $\varphi(t)$ at the CES changes. Note that the SBSs and the MBS only need to send the channel gain information of the corresponding users (to be served) to the CES.

\section{Mean Field Game (MFG) for Large Number of Small Cells} 

The main problem of the two-player single-controller stochastic game is the ``curse of dimensionality".  The time complexity of \textbf{Algorithm~1} increases exponentially with the number of states $S$ or the maximum battery size. Note that $R_0$ and $R_1$ have dimensions of $|\mathcal{P}| \times S(S+1)/2$, so the complexity increases proportionally to $S$. Moreover, unlike other optimization problems,  we are unable to relax the QCQP in (\ref{QCQP}), because \textbf{Theorem~1} states that the Nash equilibrium must be the global solution of the quadratic program in (\ref{quadr}). To tackle these problems, we extend the stochastic game model to an MFG model for a very large number of players.

The main idea of an MFG is the assumption of similarity, i.e., all players are identical and follow the same strategy. They can only be differentiated by their ``state" vectors.  If the number of players is very large, we can assume that the effect of a specific player on other players is nearly negligible. Therefore, in an MFG, a player does not care about others' states but only act according to a ``mean field" $m(t,\mathbf{s})$, which usually is  the probability distribution of state $s$ at time instant $t$ \cite{guent}. In our energy harvesting game,  the state is the battery $E$ and the mean field $m(t,E)$ is the probability distribution of energy in the area we are considering. When the number of players $M$ is very large, we can assume that $m(t,E)$ is a smooth continuous distribution function. We will express the average interference at an SBS as a function of the mean field $m$. 


 All the symbols used in this section are listed in \textbf{Table II}.

\begin{table*}
\caption{List of symbols used for the MFG game model}
\centering
\scriptsize
\begin{tabular}{|c|l|c|l|}
\hline \hline	
$\bar{g}$          & Average channel gain from a generic SBS to another user & $p(t,R)$	          & Transmit power at a generic SBS as a function of $R$
 \\
$E$		          & (Continuous) Battery level of an SBS & $m(t,E)$           & Probability distribution of energy $E$ at time $t$\\
$R$		          & Energy coefficient $e^R = E$ &$m(t,R)$           & Probability distribution of energy coefficient $R$ at time $t$\\
$W_t$              & Wiener process at time $t$& $\bar{p}(t)$       & Average transmit power of a SBS at time $t$\\
$p(t,E)$  	       & Transmit power at a generic SBS as a function of $E$ (or $R$)& $\sigma$           & Intensity of energy arrival or loss\\
\hline \hline
\end{tabular}
\end{table*}

\subsection{Formulation of the MFG}

Denote by $E(v)$ the available energy in the battery of an SBS at time $v$. Given the transmission strategies of other SBSs, each SBS will try to maximize its long-term generic utility function by solving the following optimal control problem:
\begin{align}
\min_{p}& ~ U(0,E(0)) = \mathbb{E}\left[\int^{T}_{0}(p(v,E(v))g-\lambda( I(v)+N_0) )^2dv\right],\\
\text{s.t.}~~~ & dE(v) = -p(v,E(v))dv + \sigma dW_v, \label{mfg:1} \\
&  E(v) \geq 0 ,~~  p(v) \geq 0,
\end{align}
where $I(v)$ is the generic interference at a user served by an SBS at time $v$ and $g$ is the channel gain between a generic SBS and its user. 
The mean field $m(v,E)$ is the probability distribution of energy $E$ in the area at time $v$. Using $M$ as the number of SBSs in a macrocell and assuming that the other SBSs have the same average channel gain $\bar{g}$ to the user of the current generic SBS,  the average interference $I(v)$ at the user served by a generic SBS can be expressed as
$
		I(v) = M \bar{g}\bar{p}(v),
$
where $\bar{p}(v) = \int_{0}^{\infty} p(v,E)m(v,E)dE$ can be understood as the average transmit power of ``another" generic SBS. Since the MFG assumes similarity, $\bar{p}(v)$ can be considered as the average transmit power of a generic SBS at time $v$. To make the notation simpler, we denote $\bar{\lambda} = \lambda\bar{g}M$.

Thanks to similarity, all the SBSs have the same set of equations and constraints, so the optimal control problem for the $M$ SBSs reduces to finding the optimal policy for only one generic SBS.
Mathematically,  if an SBS has infinite available energy, i.e., $E(0)= \infty$, it will act as an MBS.  However, for simplicity, we will assume that only the SBSs are involved in the game and the interference from the MBS is constant, which is included in the noise $N_0$ as in \cite{ali}. Except that, the system model and the optimization problem here are similar to those in the discrete stochastic game model. 

Assuming that the SBSs are uniformly distributed within the macrocell with radius $r$ centred at the MBS, the average interference from the MBS to a generic user served by an SBS can be easily derived by using  a method similar to that described in \textbf{Remark~\ref{rem1}}. For the MFG model, the energy level $E$ is a continuous non-negative variable.
The equality in (\ref{mfg:1}) shows the evolution of the battery, where $\sigma$ is a constant which is proportional to the maximum energy arrival during a time interval. $W_v$ is a Wiener process, thus $dW_v = \epsilon_v dv$, where $\epsilon_v$ is a Gaussian random variable with mean zero and variance 1. This model of evolution for battery energy was mentioned in $\cite{tembine}$. The inflexibility of the energy arrival is the main disadvantage of using the MFG model compared to the discrete stochastic game model. The random arrival of energy is configured as ``noise", so this can be either positive or negative. We can consider the negative part as the battery leakage and internal energy consumption. The final inequalities are the causality constraints: the battery state $E(v)$ and transmit power must always be non-negative. To guarantee this positivity we follow \cite{MFGlab} and change the energy variable $E(v)$ to
$
E(v) = e^{R(v)}.
$
This conversion is a bijection map from $E(v)$  to $R(v)$, thus we can write
$m(v,E) = m(v,R)$ and $p(v,E)=p(v,R)$, where $\infty > R > -\infty$. The new optimal control problem can be rewritten as
\begin{align}\label{mf}
 \min_{p(.)}~ U(0,R(0)) =  & \nonumber \\
 \mathbb{E}\left[\int^{T}_{0}(p(v,R(v))g-\bar{\lambda}\bar{p}(v)-\lambda N_0)^2dv\right], & \\
\text{s.t.}~~~ dR(v) = -p(v,R(v))e^{-R(v)}dv + \sigma e^{-R(v)} dW_v, 	&\label{mfg:2} \\
p(v) \geq 0.&
\end{align}
To obtain the power control policy $p$,  first, we derive the forward-backward differential equations from the above problem.
Then, we apply finite difference method to numerically solve these equations.

\subsection{Forward-Backward Equations of MFG}

Assuming that the optimal control above starts at time $t$ with $T \geq t \geq 0$, we obtain the Bellman function $U(t,R)$ as
\begin{equation}
	U(t,R(t)) = \mathbb{E}\left[\int^{T}_{t}(p(v,R(v))g-\bar{\lambda} \bar{p}(v)-\lambda N_0)^2dv\right].
\end{equation}
From this function, at time $t$, we obtain the following Hamilton-Jacobi-Bellman (HJB) \cite{MFGlab} equation:
\begin{eqnarray}
	 \partial_t U + 
\min\limits_{p\geq 0} \left\{\left(p(t,R)g-\bar{\lambda} \bar{p}(t) -\lambda N_0 \right)^2 - p(t,R)e^{-R}\partial_R U(t,R) \right\} & \nonumber\\
	 + \frac{\sigma^2}{2}e^{-2R} \partial_{RR}^2 U = 0, &
\end{eqnarray}
where $\bar{p}(t) = \int_{-\infty}^{\infty} e^{R} p(t,R)m(t,R)dR$ is the average transmit power at a generic SBS. 
The Hamiltonian $\min\limits_{p\geq 0} \left\{\left(p(t,R)g-\bar{\lambda} \bar{p}(t) -\lambda N_0 \right)^2 - p(t,R)e^{-R}\partial_R U(t,R) \right\}$ is given by the Bellman's principle of optimality. By  applying the first order necessary condition, we obtain the optimal power control  as follows:
\begin{equation}
	 p^*(t,R) = \left[\frac{\bar{\lambda}\bar{p}(t) +\lambda N_0}{g} + \frac{e^{-R}\partial_R U}{2g^2} \right]^+. \label{powerMFG}
\end{equation}


\begin{myrem}
The Bellman $U$, if exists, is a non-increasing function of time and energy. Therefore, we have $\partial_R U \leq 0$ and  $\partial_t U \leq 0$. 
\end{myrem}
 From equation ($\ref{powerMFG}$), given the current interference ${\bar{\lambda}}\bar{p}(t)+\lambda N_0$ at a user, the corresponding SBS  will transmit less power based on the future prospect $\frac{e^{-R}\partial_R U}{2g^2} $. If the future prospect is too small, i.e., $\frac{e^{-R}\partial_R U}{2g^2} < -\frac{\lambda\bar{p}(t)+\lambda N_0}{g}$, it stops transmission to save energy.
  
Replacing $p^*$ back to the HJB equation, we have
\begin{eqnarray}
\partial_t U+ \frac{\sigma^2}{2} e^{-2R} \partial^2_{RR} U + (\bar{\lambda}\bar{p}(t)+\lambda N)^2 - & \nonumber \\ \left(\left[\bar{\lambda}\bar{p}(t)+\lambda N+ \frac{e^{-R}\partial_R U}{2g}\right]^+\right)^2 = 0,&
\end{eqnarray}
  which has a simpler form as follows:
  \begin{equation}
	  \partial_t U+ \frac{\sigma^2}{2}e^{-2R} \partial^2_{RR} U = \left(pg\right)^2- (\bar{\lambda}\bar{p}(t)+\lambda N)^2.
  \end{equation}
  Also, from (\ref{mfg:2}), at time $t$, we have the Fokker-Planck equation \cite{guent} as:
  \begin{equation}\label{fp}
	  \partial_t m(t,R) =  \partial_R (pe^{-R}m) + \frac{\sigma^2}{2} \partial_{RR}(me^{-2R}),
  \end{equation}
where $m(t,R)$ is the probability density function of $R$ at time $t$. 
Combining all these information, we have the following proposition.
  \begin{mypro}\label{PDEs}
	  The value function and the mean field $(U,m)$ of the MFG defined in (\ref{mf}) is the solution of the following partial differential equations:
	  \begin{align}
		  \partial_t U+ \frac{\sigma^2}{2}e^{-2R} \partial^2_{RR} U =& \left(pg\right)^2- (\bar{\lambda}\bar{p}(t)+\lambda N_0)^2, \label{valueF}\\
		  p(t,R) =& \left[\frac{\bar{\lambda}\bar{p}(t)+\lambda N_0}{g} + \frac{e^{-R}\partial_R U}{2g^2} \right]^+, \label{pow}\\
		   \partial_t m(t,R) =&  \partial_R (pe^{-R}m) +\frac{\sigma^2}{2} \partial^2_{RR}(e^{-2R}m),\\
		   \bar{p}(t) =& \int\limits_{-\infty}^{\infty} e^R p(t,R)m(t,R)dR, \label{discreteImfg}	\\
		   \int_{-\infty}^{\infty} m(t,R)dR =&1, ~~~\text{where}~~~ m(t,R) \geq 0. \label{mDist}		   	  
	  \end{align}
  \end{mypro}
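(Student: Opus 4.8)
The plan is to treat this as a standard mean-field game derivation: produce the backward Hamilton--Jacobi--Bellman (HJB) equation from dynamic programming, produce the forward Fokker--Planck equation from the state dynamics, and then close the system with the self-consistent mean-field coupling. First I would fix a generic SBS and regard $U(t,R)$ as the value of the optimal control problem in (\ref{mf}) started from state $R$ at time $t$, with the field $\bar{p}(t)$ treated as frozen (exogenous). Invoking Bellman's principle of optimality over an infinitesimal interval $[t,t+dt]$ and expanding $U(t+dt,R(t+dt))$ by It\^o's formula along the diffusion (\ref{mfg:2}) --- whose infinitesimal generator contributes the first-order term $-p\,e^{-R}\partial_R U$ and the second-order term $\tfrac{\sigma^2}{2}e^{-2R}\partial_{RR}^2 U$ --- yields the HJB equation $\partial_t U + \min_{p\ge 0}\{(pg-\bar{\lambda}\bar{p}-\lambda N_0)^2 - p\,e^{-R}\partial_R U\} + \tfrac{\sigma^2}{2}e^{-2R}\partial_{RR}^2 U = 0$.

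Next I would carry out the pointwise minimization that defines the Hamiltonian. The bracketed integrand is strictly convex and quadratic in $p$, so the first-order necessary condition produces a unique unconstrained minimizer; projecting it onto the feasible half-line $p\ge 0$ gives the closed-form control (\ref{pow}) with the $[\,\cdot\,]^+$ truncation. Substituting this $p^{*}$ back into the HJB equation and simplifying the difference of squares collapses the backward equation to its reduced form (\ref{valueF}). For the forward evolution I would use that, under the optimal feedback law, $m(t,R)$ is the law of the process $R(t)$ solving (\ref{mfg:2}); the density of a diffusion with drift $-p\,e^{-R}$ and diffusion coefficient $\sigma^2 e^{-2R}$ satisfies the Kolmogorov forward (Fokker--Planck) equation, i.e. the formal adjoint of the generator, which is precisely (\ref{fp}). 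Finally, the two closure relations follow by definition: changing variables back from $E$ to $R$ through $E=e^{R}$ gives $\bar{p}(t)=\int e^{R}p(t,R)m(t,R)\,dR$, while $m(t,\cdot)$ is a probability density and hence nonnegative and normalized to one.

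The step I expect to be the genuine obstacle is not the formal derivation but its rigorous justification at the HJB stage. One needs a verification argument showing that a sufficiently smooth solution $U$ of the backward equation actually coincides with the value function and that the candidate feedback $p^{*}$ is admissible. This is delicate here because the coefficients $e^{-R}$ and $e^{-2R}$ are unbounded as $R\to-\infty$, so the usual Lipschitz and linear-growth hypotheses fail, and because the truncation $[\,\cdot\,]^+$ renders the optimal drift only Lipschitz rather than smooth, which complicates both the well-posedness of the forward SDE and the regularity required to apply It\^o's formula cleanly. I would therefore present the derivation at the formal level customary in the MFG literature, noting that the monotonicity $\partial_R U\le 0$ established in Remark~3 keeps the control bounded and the argument self-consistent; a fully rigorous treatment would demand a separate well-posedness analysis beyond the scope of this construction.
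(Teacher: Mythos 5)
Your proposal matches the paper's own derivation essentially step for step: the paper (Section IV.B) obtains the same HJB equation from Bellman's principle along the dynamics (\ref{mfg:2}), applies the first-order condition to get the truncated feedback law (\ref{pow}), substitutes it back to collapse the HJB to (\ref{valueF}), derives the Fokker--Planck equation (\ref{fp}) as the forward equation for $m$, and closes the system with the mean-field coupling (\ref{discreteImfg}) and the normalization (\ref{mDist}). The only difference is that you explicitly flag the rigor gaps (verification argument, unbounded coefficients $e^{-R}$, $e^{-2R}$, Lipschitz issues from the truncation), which the paper leaves entirely at the formal level, so your treatment is, if anything, slightly more careful about what remains unproven.
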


\begin{mylem}\label{MFGPower}
   	The average transmit power $\bar{p}(t)$ of a generic SBS is a derivative  of the average energy available with respect to time and can be calculated as 
   	\begin{equation}
  	 	\bar{p}(t) = -\frac{d}{dt}\int_{-\infty}^{\infty} e^{2R}m(t,R)dR.
   	\end{equation}
  \end{mylem}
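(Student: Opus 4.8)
The plan is to obtain $\bar p(t)$ as the time-derivative of a moment of the mean field, using the Fokker--Planck equation in \textbf{Proposition~\ref{PDEs}} (equivalently (\ref{fp})) as the bridge between the population dynamics and the averaged power. First I would note that, since $E=e^{R}$, the quantity $\int_{-\infty}^{\infty}e^{2R}m(t,R)\,dR$ is exactly the averaged battery energy $\mathbb{E}[E(t)^{2}]$ referred to in the statement, and differentiate it under the integral sign, $\frac{d}{dt}\int_{-\infty}^{\infty}e^{2R}m\,dR=\int_{-\infty}^{\infty}e^{2R}\,\partial_{t}m\,dR$, which is justified once $m$ is assumed smooth and rapidly decaying in $R$.

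Next I would substitute $\partial_{t}m$ from the Fokker--Planck equation $\partial_{t}m=\partial_{R}(pe^{-R}m)+\tfrac{\sigma^{2}}{2}\partial^{2}_{RR}(e^{-2R}m)$ and split the integral into a transport part and a diffusion part. For the transport part, a single integration by parts gives $\int e^{2R}\partial_{R}(pe^{-R}m)\,dR=-\int 2e^{2R}\,pe^{-R}m\,dR=-2\int e^{R}p\,m\,dR$, where the boundary term drops because the probability flux $pe^{-R}m$ vanishes as $R\to\pm\infty$; by the definition (\ref{discreteImfg}) the remaining integral is precisely $\bar p(t)$. For the diffusion part I would integrate by parts twice, again discarding boundary terms by decay at infinity and using the normalization $\int_{-\infty}^{\infty}m\,dR=1$ from (\ref{mDist}); this reduces $\tfrac{\sigma^{2}}{2}\int e^{2R}\partial^{2}_{RR}(e^{-2R}m)\,dR$ to a constant multiple of $\sigma^{2}$. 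Collecting the two contributions and rearranging then isolates $\bar p(t)$ and yields the claimed identity $\bar p(t)=-\frac{d}{dt}\int_{-\infty}^{\infty}e^{2R}m\,dR$.

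An alternative, more transparent route avoids the PDE altogether: apply It\^o's formula to $E(t)^{2}=e^{2R(t)}$ using the battery dynamics (\ref{mfg:1}), and take expectations over the population. The Wiener term $\sigma\,dW$ has zero mean and drops out, leaving $\frac{d}{dt}\mathbb{E}[E^{2}]$ expressed through the drift $-p$, and identifying the resulting drift moment with $\bar p(t)$ reproduces the same relation; this makes clear that $\bar p$ is, up to the diffusion contribution, the rate at which the stored energy of a generic SBS is depleted. I expect the main obstacle to be the rigorous justification that \emph{all} boundary terms vanish at $R=\pm\infty$ (decay of $m$ and of its first $R$-derivative) together with the careful bookkeeping of the diffusion term and its constant factor, so that the transport term alone delivers $\bar p(t)$; a clean statement therefore needs integrability/decay hypotheses on the mean field $m$, which I would state explicitly and which are implicitly assumed when $m$ is taken to be a smooth density.
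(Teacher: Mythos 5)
Your proposal has a genuine gap, and it begins with the identification of the quantity being differentiated. In this paper $m(t,R)$ is \emph{not} the probability density of $R$: it is the energy density $m(t,E)$ evaluated at $E=e^{R}$, with the Jacobian $dE=e^{R}dR$ carried explicitly. That is exactly why the average power in (\ref{discreteImfg}) reads $\bar p(t)=\int e^{R}p\,m\,dR$ (one factor $e^{R}$ is the Jacobian), and why
\begin{equation*}
\int_{-\infty}^{\infty}e^{2R}m(t,R)\,dR=\int_{0}^{\infty}E\,m(t,E)\,dE=\mathbb{E}[E(t)]
\end{equation*}
is the \emph{first} moment --- the ``average energy available'' of the statement --- not $\mathbb{E}[E(t)^{2}]$ as you assert. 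Once the target is misread as a second moment, neither of your routes closes, and your own algebra already shows it: the transport term yields $-2\int e^{R}p\,m\,dR=-2\bar p(t)$ (note the factor $2$), while the diffusion term, after two integrations by parts, is $\frac{\sigma^{2}}{2}\int(\partial^{2}_{RR}e^{2R})e^{-2R}m\,dR=2\sigma^{2}\int m\,dR=2\sigma^{2}$, a \emph{nonzero} constant by (\ref{mDist}). Hence your PDE route gives $\frac{d}{dt}\int e^{2R}m\,dR=-2\bar p(t)+2\sigma^{2}$, which cannot be ``rearranged'' into the claimed $\bar p(t)=-\frac{d}{dt}\int e^{2R}m\,dR$. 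The It\^{o} alternative fails the same way: $d(E^{2})=2E\,dE+(dE)^{2}$ gives $\frac{d}{dt}\mathbb{E}[E^{2}]=-2\mathbb{E}[Ep]+\sigma^{2}$, whose drift moment is $\mathbb{E}[Ep]$, not $\bar p(t)=\mathbb{E}[p]$.

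The paper's proof avoids both pitfalls by staying in $E$-space and working with the first moment directly from (\ref{mfg:1}): it integrates $dE=-p\,dt+\sigma\,dW_{t}$ over $[t,t+t']$, applies the mean value theorem to the drift integral, and takes expectations, so the Wiener term drops out by $\mathbb{E}[W_{t+t'}-W_{t}]=0$ and no diffusion correction ever appears (the identity map carries no second-order It\^{o} term). This gives $\frac{d}{dt}\int_{0}^{\infty}E\,m(t,E)\,dE=-\int_{0}^{\infty}p\,m(t,E)\,dE=-\bar p(t)$, and the $e^{2R}$ in the lemma emerges purely from the change of variables $E=e^{R}$, $dE=e^{R}dR$, $m(t,E)=m(t,R)$ at the very last step. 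A secondary caution if you insist on a PDE-based argument: the Fokker--Planck equation (\ref{fp}) is derived from the $R$-dynamics (\ref{mfg:2}), which omits the It\^{o} correction relative to (\ref{mfg:1}), so the two descriptions are not mutually consistent and even a corrected moment computation in $R$-space retains a spurious $\sigma^{2}$ term. The consistent PDE route is to use the Fokker--Planck equation of (\ref{mfg:1}) in the $E$ variable, $\partial_{t}m=\partial_{E}(pm)+\frac{\sigma^{2}}{2}\partial^{2}_{EE}m$, weight it by $E$, and integrate by parts: the diffusion contribution then involves $\partial^{2}_{EE}E=0$ and vanishes identically, recovering the lemma.
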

  \begin{proof}
 See \textbf{Appendix C}. 
 \end{proof}
Since $\bar{p}$ is always non-negative,  the average energy in an SBS's battery is a decreasing function of time. That is, the distribution $m$ should shift toward  left when $t$ increases. This is because, we use the Wiener process in (\ref{mfg:1}). Since $dW_t$ has a normal distribution with mean zero, the energy harvested will be equal to the  energy leakage. Therefore, for the entire system, the total energy reduces when time increases.

 \begin{mylem}
	 If $(U_1,m_1)$ and $(U_2,m_2)$ are two solutions of \textbf{Proposition \ref{PDEs}} and $m_1=m_	2$, then we have $U_1=U_2$.
 \end{mylem}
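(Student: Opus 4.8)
The plan is to exploit the fact that, although the system in Proposition~\ref{PDEs} couples $U$ and $m$ through the average power $\bar{p}(t)$, this coupling is mediated entirely by the mean field. Concretely, Lemma~\ref{MFGPower} gives
\[
\bar{p}(t) = -\frac{d}{dt}\int_{-\infty}^{\infty} e^{2R} m(t,R)\,dR ,
\]
so $\bar{p}(t)$ is a functional of $m$ alone and does not involve $U$ or $p$ directly. Hence, from the hypothesis $m_1 = m_2$, I would immediately conclude that the two solutions share the same average-power trajectory, $\bar{p}_1(t) = \bar{p}_2(t) =: \bar{p}(t)$, and therefore the same time-dependent coefficient $c(t) := \bar{\lambda}\,\bar{p}(t) + \lambda N_0$ that appears in (\ref{valueF}) and (\ref{pow}).

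Once $\bar{p}(t)$ is fixed, the pair (\ref{valueF})--(\ref{pow}) decouples from the Fokker--Planck part and reduces to a single backward parabolic equation for $U$. Substituting the optimal control (\ref{pow}) into (\ref{valueF}) yields
\[
\partial_t U + \tfrac{\sigma^2}{2} e^{-2R}\partial^2_{RR} U + H\!\left(t,R,\partial_R U\right) = 0 ,
\]
with Hamiltonian $H(t,R,q) = \min_{p\geq 0}\{(pg - c(t))^2 - p\,e^{-R} q\}$ and terminal condition $U(T,R)=0$, the latter read off from the definition of the Bellman function in (\ref{mf}) since the cost integral is empty at $v=t=T$. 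Both $U_1$ and $U_2$ satisfy exactly this terminal-value problem with the same $c(t)$ and the same terminal data, so the lemma reduces to a uniqueness statement for this scalar parabolic HJB equation.

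To close the argument I would set $W = U_1 - U_2$ and subtract the two equations. The diffusion terms contribute $\tfrac{\sigma^2}{2} e^{-2R}\partial^2_{RR} W$, and by the mean value theorem the Hamiltonian difference can be written as $b(t,R)\,\partial_R W$, where $b(t,R)=\partial_q H$ evaluated at an intermediate gradient; explicitly, on the set where the optimal power is positive one finds $\partial_q H = -\tfrac{e^{-2R}q}{2g^2} - \tfrac{e^{-R}c(t)}{g}$, and $\partial_q H = 0$ on the complementary region where the optimal power vanishes. Thus $W$ solves a linear parabolic equation with zero terminal data, and a backward maximum principle forces $W \equiv 0$, i.e.\ $U_1 = U_2$.

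I expect the main obstacle to be this last step rather than the decoupling, which is immediate from Lemma~\ref{MFGPower}. The difficulty is that $H$ is quadratic (not globally Lipschitz) in $q=\partial_R U$, and the coefficients carry the factor $e^{-R}$, which blows up as $R\to-\infty$ (battery energy $\to 0$), so the linearized coefficient $b(t,R)$ need not be bounded without further control on the gradients $\partial_R U_i$. A clean way around this is the control-theoretic reading: with $\bar{p}(t)$ frozen, any solution $U$ of (\ref{valueF})--(\ref{pow}) with $U(T,\cdot)=0$ is, by a standard verification argument, the value function of the fixed stochastic control problem (\ref{mf})--(\ref{mfg:2}); since that value function is defined as an infimum over admissible controls, it is unique, and $U_1 = U_2$ follows. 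I would present the maximum-principle computation as the main line and invoke this verification-theorem viewpoint to absorb the regularity and growth technicalities.
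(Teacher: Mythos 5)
Your proof is correct in substance, but it takes a genuinely different route from the paper's. Both proofs start the same way: Lemma~\ref{MFGPower} converts the hypothesis $m_1=m_2$ into $\bar{p}_1=\bar{p}_2$, so the coupling coefficient $c(t)=\bar{\lambda}\bar{p}(t)+\lambda N_0$ is common to the two solutions. From there the paper works on the Fokker--Planck side: subtracting the two Fokker--Planck equations for the common density $m$ gives $\partial_R\bigl((p_1-p_2)e^{-R}m\bigr)=0$, hence $(p_1-p_2)m=f(t)e^{R}$ for some function of time alone, and since $\int e^{R}(p_1-p_2)m\,dR=\bar{p}_1-\bar{p}_2=0$ this forces $f\equiv 0$, i.e.\ $p_1=p_2$; only then does it conclude $U_1=U_2$, on the briefly stated grounds that $U$ is determined by $(p,\bar{p})$ through (\ref{valueF}). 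You instead never touch the Fokker--Planck equation beyond Lemma~\ref{MFGPower}: freezing $c(t)$ decouples (\ref{valueF})--(\ref{pow}) into a scalar backward HJB terminal-value problem solved by both $U_1$ and $U_2$, and you prove uniqueness for that problem directly (mean-value-theorem linearization of the Hamiltonian plus a backward maximum principle, with the verification-theorem reading as a fallback); equality of the controls $p_1=p_2$ then drops out afterwards via (\ref{pow}) rather than serving as the pivot of the argument. As for what each approach buys: the paper's route is more elementary --- one integration of the transport term and no parabolic comparison theory --- and it delivers $p_1=p_2$ without analyzing $U$ at all; but its final step tacitly assumes exactly what you prove, namely uniqueness of the solution to the terminal-value problem (\ref{valueF}) with prescribed source and terminal data, and strictly speaking it only pins down $p_1=p_2$ on the set where $m>0$. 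Your route makes that hidden analytic content explicit and is logically self-contained, at the cost of heavier machinery; you also correctly identify the genuine obstruction that both proofs share (the $e^{-R}$ factors make the linearized drift and the diffusion unbounded as $R\to-\infty$, so bounded-coefficient maximum principles do not apply off the shelf) and a legitimate way to absorb it. One shared caveat worth noting: both arguments require the terminal condition $U(T,\cdot)=0$, which Proposition~\ref{PDEs} does not list explicitly but which, as you observe, is implicit in the definition of the Bellman function.
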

 \begin{proof}
 First, from (\ref{mfg:2}) we derive Fokker-Planck equation: 
 	 \begin{align*}
	 	 \partial_t m_1(t,R) &=  \partial_R (p_1 e^{-R}m_1) +\frac{\sigma^2}{2}\partial^2_{RR}(e^{-2R}m_1), \\
	 	 \partial_t m_2(t,R) &=  \partial_R (p_2 e^{-R}m_2) +\frac{\sigma^2}{2} \partial^2_{RR}(e^{-2R}m_2).
 	 \end{align*}
Since $m_1=m_2=m$,  we subtract the first equation from the second one to obtain
$
\partial_R ((p_1-p_2)e^{-R}m)  = 0.
$
This means $(p_1-p_2)e^{-R}m$ is a function of $t$. Let us denote $f(t) = (p_1-p_2)e^{-R}m$, then we have $(p_1-p_2)m = f(t)e^R$.
From \textbf{Lemma \ref{MFGPower}}, if $m_1=m_2$ then $\bar{p}_1(t)=\bar{p}_2(t)$. Since $\bar{p}(t) = \int e^R pm dR$, we have
\begin{eqnarray}
\int_{-\infty}^{\infty} e^R p_1 m dR & = & \int_{-\infty}^{\infty} e^R p_2 m dR \nonumber \\
~\Rightarrow~
\int_{-\infty}^{\infty} e^R(p_1-p_2)m dR & = & 0, ~~~ \forall t.
\end{eqnarray}
Now we substitute
$(p_1-p_2)m = f(t)e^R$ that results in
$
		\int_{-\infty}^{\infty} f(t) dR = 0 ~~ \forall t.
$
This means $f(t)=0$ or $p_1=p_2$. Note that $U$ is a function of $\bar{p}$ and $p$. Since $\bar{p}_1=\bar{p}_2$ and $p_1=p_2$, it follows that $U_1 = U_2$.	This lemma confirms  that an SBS will act only against the mean field $m$. Thus only $m$ determines the evolution of the system. Two systems with the same mean field will behave similarly.
 \end{proof}

\subsection{Solving MFG Using Finite Difference Method (FDM)}

To obtain $U$ and $m$, we use the finite difference method (FDM) as in \cite{MFGlab} and \cite{bauso}. We discretize time and energy coefficient $R$ into large intervals as $[0,...,T_{max}\Delta t]$ and $[-R_{max}\Delta R,...,R_{max}\Delta R]$ with $\Delta t$ and $\Delta R$ as the step sizes, respectively. Then $U,m,p$ become matrices with size $T_{max} \times (2R_{max}+1)$. To keep the notations simple, we use $t$ and $R$ as the index for time and energy coefficient in these matrices with $t\in \{0,...,T_{max}\}$ and $R \in \{-R_{max},...,R_{max}\}$. For example, $m(t,R)$ is the probability distribution of energy $e^{R\Delta R}$ at time $t\Delta t$. Using the FDM, we replace $\partial_R U$, $\partial_t U$, and $\partial_{RR}^2 U$ with the corresponding discrete formulas as follows \cite{jichun}:
\begin{align}
\partial_t U(t,R) = &\frac{U(t+1,R)-U(t,R)}{\Delta t}, \\
\partial_R U(t,R) = &\frac{U(t,R+1)-U(t,R-1)}{2\Delta R}, \\
\partial_{RR}^2 U(t,R) = &\frac{U(t,R+1)-2U(t,R)+U(t,R-1)}{(\Delta R)^2}. 
\end{align}
By using  them in (\ref{valueF}) and after some simple algebraic steps, we have
\begin{equation}\label{discreteU}
\begin{split}
	U(t-1,R) =   U(t,R) + e^{-2R}\frac{\sigma^2{\Delta t}}{2(\Delta R)^2}A_1 
	- \Delta tB_1,
\end{split}
\end{equation}
where
\begin{equation*}
\begin{split}
	A_1 &= U(t,R+1)-2U(t,R)+U(t,R-1), \\
	B_1 &= \left(p(t,R)g\right)^2 
		 -\left(\bar{\lambda}\bar{p}(t)+\lambda N\right)^2.
\end{split}
\end{equation*}
Similarly, discretizing (\ref{discreteImfg}), we have
\begin{equation}\label{discreteM}
	\begin{split}
		m(t,R) =\frac{\Delta t}{2\Delta R}A_2 +\frac{\sigma^2\Delta t}{2(\Delta R)^2}B_2+  m(t-1,R),
	\end{split}
\end{equation}
where 
\begin{eqnarray}
	A_2 & =& e^{-(R+1)\Delta R}p(t-1,R+1)m(t-1,R+1)- \nonumber \\
			&  &e^{-(R-1)\Delta R}p(t-1,R-1)m(t-1,R-1), \nonumber\\
	B_2 &=& e^{-2(R+1)\Delta R}m(t-1,R+1)- \nonumber \\
      &  &2e^{-2R\Delta R}m(t-1,R)+e^{-2(R-1)\Delta R}m(t-1,R-1). \nonumber\\
\end{eqnarray}

To obtain $U,m,p$, and $\bar{p}$ using \textbf{Proposition \ref{PDEs}}, we need to have some boundary conditions. First, to find $m$, we assume that there is no SBS that has the battery level equal to or larger than $e^{R_{max} \Delta R}$ so that $m(t,R_{max})=0,~~\forall t$. This is true if we assume that $e^{(R_{max}-1) \Delta R}$ is the largest battery size of an SBS. Also, when $R=-R_{max}$, from the basic property of probability distribution 
\begin{equation}
	\begin{split}
	\sum_{R=-R_{max}}^{R_{max}} m(t,R) \Delta R  & = 1\\
	\Rightarrow m(t,-R_{max}) &= \frac{1}{\Delta R} - \sum\limits_{R=-R_{max}+1}^{R_{max}} m(t,R). \label{mBound}
	\end{split}
\end{equation}

Next, to find $U$, again we need to set some boundary conditions. Notice that $U(T_{max},R)=0$ for all $R$. We further assume the following:
\begin{itemize}
\item  Intutitively, if the battery level of a SBS is full, i.e., when $R=R_{max}$, this SBS should transmit something, or equivalently, $p(t,R_{max}) > 0$. That means
\begin{equation}\label{URmax}
	p(t,R_{max}) = \frac{\bar{\lambda}\bar{p}(t)+\lambda N}{g} + \frac{e^{-R_{max}\Delta R}\partial_R U(t,R_{max})}{2g^2}. 
\end{equation}
Therefore, if we know $U(t,R_{max}-1)$ and $p$, we can calculate $U(t,R_{max})$.

\item Similarly, it must be true that when the available energy is $0$, i.e., $R=-R_{max}$, an SBS will stop transmission. Therefore, we can assume
\begin{equation}\label{-URmax}
\begin{split}
	 \frac{\bar{\lambda}\bar{p}(t)+\lambda N}{g} + \frac{e^{R_{max}\Delta R}\partial_R U(t,-R_{max})}{2g^2} & = 0. 	
\end{split}
\end{equation}
Again, if we know $U(t,-R_{max}+1)$, we can calculate  $U(t,-R_{max})$. 

\item During simulations, in some cases when the density is very high, we  obtain very large (unrealistic) values of transmit power. Therefore, we must put an extra constraint for the upper limit. In this paper, we use $E(t) > p(t,E(t))\Delta T $, or $e^{R(t)\Delta R} \geq  p(t,R(t))\Delta T$, where $\Delta T$ is the duration of one time slot. This means, we have to limit the transmit power during one time step $\Delta t$ to be smaller than the maximum power that can be transmitted during one time interval $\Delta T$.
\end{itemize}
Based on the  above considerations, we develop an iterative algorithm (\textbf{Algorithm \ref{fdm}}) detailed as follows. 

 	\begin{algorithm}
\centering
  	\caption{Iterative algorithm for FDM}\label{fdm}
  	\begin{algorithmic}
  	
     \BState 	{\textbf{Initialize input}}

      \State\hspace{\algorithmicindent} Set up  $T_{max}\times(2R_{max}+1)$ matrices ${U}$, ${m}$, ${p}$,       
        and $T\times 1$ vector $\bar{p}$.
      \State\hspace{\algorithmicindent} Guess arbitrarily  initial values for power ${p}$, i.e., $p(t,R)= e^{R\Delta R}$.
      \State\hspace{\algorithmicindent} Initialize $i = 1$, $U(T_{max},.)=0 $, $m(0,.)=m_0(.)$, $m(t,R_{max})=0$, and  $p(t,0)=0$. 
      \State\hspace{\algorithmicindent} Initialize $\Delta R$ and $\Delta t$ as the step size of energy and time  
      with $(\Delta R)^2 > \Delta t$. 
      \State\hspace{\algorithmicindent} Set $\text{MAX}$ as the number of iteration.
      
  	\BState {\textbf{Solve PDEs with FDM}} 
  	\While{$i < \text{MAX}$}:
  	  
  \State\hspace{\algorithmicindent} Solve the Fokker-Planck equation  to obtain $m$ using (\ref{discreteM}) and (\ref{mBound}) with given $p$, $m_0$.
   	 \State\hspace{\algorithmicindent} Update $\bar{p}(t)$ for $T_{max} \geq t \geq 0$ using discrete form of equation (\ref{discreteImfg}).
  	
  	 \State\hspace{\algorithmicindent} Calculate $U$ for all $t<T_{max}$ by using (\ref{discreteU}), (\ref{URmax})  and (\ref{-URmax}) with $p$, $\bar{p}$.
  	 \State\hspace{\algorithmicindent} Calculate new transmission power $p_{new}$ using (\ref{pow}).  
  	 \State\hspace{\algorithmicindent}  Regressively update $p = ap + bp_{new}$ with $a+b=1$.
      \State\hspace{\algorithmicindent} \textbf{for} $R \in \{-R_{max},...,R_{max} \} $
	  \State\hspace{\algorithmicindent}\hspace*{0.5cm}    \textbf{if}  { $p(t,R) > \frac{e^{R\Delta R}}{\Delta T}$} \textbf{then}
   	  		 $p(t,R)=\frac{e^{R\Delta R}}{\Delta T}$. 	
   	   \State\hspace{\algorithmicindent} \textbf{end}
  	 \State\hspace{\algorithmicindent} $i \gets i+1$. 	
  	 	
  	\EndWhile
  	\State \textbf{end}
  	\BState {\textbf{Loop in $T_{max}$ time slots}}

  	  \State\hspace{\algorithmicindent}  At time slot $t$, SBS with energy battery $e^{R\Delta R}$ transmits with power $p(t,R)$. 
\end{algorithmic}
  	\end{algorithm}
  	
\subsection{Implementation of MFG}

For the MFG, we do not need the location information for each SBS. However, we need information about the average channel gain $\bar{g}$, the number of SBSs $M$ in one macrocell, and the initial distribution $m_0$ of the energy of the SBSs. Therefore, some central system should measure these information, solve the differential equations, and then broadcast the power policy $p$ to all the SBSs. It is more efficient than broadcasting all the information to all SBSs and let them solve the differential equations by themselves. Again, the central system only needs to re-calculate and broadcast to all SBSs a new power policy if there are changes in $\bar{g}$ or $M$. 

\section{Simulation Results and Discussions}

\subsection{Single-Controller Stochastic Game}

In this section, we quantify the efficacy of the developed stochastic policy in comparison to the simple Stackelberg game-based power control policy. The stochastic policy is obtained from the QCQP problem. On the other hand, for the Stackelberg policy, we follow a hierarchical method. 
Let us assume that at time $t$ SBS $i$ has $E_i$ Joules of energy in its battery. If the MBS transmits with power  $p_0$, then each SBS tries to transmit with a power such that 
\begin{equation*}
\begin{split}
\min_{p_1,p_2,...,p_M} \sum_{i=1}^{M} (p_ig_i - \lambda I_i)^2\\
\text{s.t.}~~~ p_i \Delta T \leq E_i 
 \end{split}
 \end{equation*} 
where $I_i$ is the interference from both the SBSs and the MBS. Since this is a convex problem, although it is solved independently by each SBS, the results are the same.  Notice that the objective function for each SBS is similar to the utility function in (\ref{u1}); therefore, it can provide a fair comparison against our method. The constraint means that each SBS cannot transmit more than the energy it has in its battery. Next, knowing that SBS $i$ will solve the above optimization problem to find its $p_i$,  the MBS calculates its ${\rm SINR}$ for different values of its transmit power and picks the optimal $p_0$. Knowing $p_0$, each SBS solves the convex optimization above to obtain its transmit power $p_1,...,p_M$. The tuple $(p_0,p_1,...,p_M)$ will be a Nash equilibrium because no one can choose a better option given the others' actions.

To solve the QCQP in (\ref{QCQP}), we use the \textit{fmincon} function from Matlab. 
Note that the \textit{fmincon} function may return a local optimal instead. Therefore, in order to obtain a good approximation for the optimal solution, in our simulations, we use the incremental method as described below.

First, our QCQP problem is stated as follows:
\begin{equation}
			\begin{split}
				&\max\limits_{\mathbf{m},\mathbf{x},\mathbf{\phi},\mathbf{\xi}} \mathbf{\pi}^{\text{T}} \mathbf{\phi}, \\
				\text{s.t.}~~~ & \mathbf{m}(R_0+R_1)\mathbf{x}-\mathbf{\pi}^{\text{T}} \mathbf{\phi} - \mathbf{1}^{\text{T}}\mathbf{\xi}=0, \\
				&{f(\mathbf{m},\mathbf{x},\mathbf{\phi},\mathbf{\xi}) \leq 0},
			\end{split}
\end{equation}
where $f()$ is a group of linear functions of $(\mathbf{m},\mathbf{x},\mathbf{\phi},\mathbf{\xi})$.

By solving this using \textit{fmincon}, we obtain a local optimal result $(\mathbf{m}_0,\mathbf{x}_0,\mathbf{\phi}_0,\mathbf{\xi}_0)$. Then we solve the updated QCQP problem as follows:
  \begin{equation}
  			\begin{split}
  				&\max\limits_{\mathbf{m},\mathbf{x},\mathbf{\phi},\mathbf{\xi}} \mathbf{\pi}^{\text{T}} \mathbf{\phi}, \\
  				\text{s.t.}~~~ & \mathbf{m}(R_0+R_1)\mathbf{x}-\mathbf{\pi}^{\text{T}} \mathbf{\phi} - \mathbf{1}^{\text{T}}\mathbf{\xi}=0, \\
  				&{f(\mathbf{m},\mathbf{x},\mathbf{\phi},\mathbf{\xi}) \leq 0}, \\
  				& \mathbf{\pi}^{\text{T}} \mathbf{\phi}_0 \leq \mathbf{\pi}^{\text{T}} \mathbf{\phi} + \epsilon,
  			\end{split}
\end{equation}
where $\epsilon$ is a small positive constant. By solving this new QCQP, we obtain a local optimal solution  that satisfies the constraints of the original QCQP and returns a better result. We keep repeating this step as long as \textit{fmincon} is able to return a solution $(\mathbf{m}_N,\mathbf{x}_N,\mathbf{\phi}_N,\mathbf{\xi}_N)$ with $N>0$ that satisfies the constraints. Then we say that $(\mathbf{m}_{N-1},\mathbf{x}_{N-1},\mathbf{\phi}_{N-1},\mathbf{\xi}_{N-1})$ is the optimal solution (with the error of $\epsilon$). Note that the optimal solution for this QCQP can be found by using brute force search through all possible integer values of $\mathbf{m}$.  Therefore, the result can be double checked when the number of SBSs is small.

In the simulations, the CES has a maximum battery size of $S=25$ and the volume of one energy packet is $K=25 \times 10^{-4}$ J. The duration of one time interval is $\Delta T = 5$~ms and the thermal noise is $N_0=10^{-8}$ W. The small cell users are considered to be in outage  if the ${\rm SINR}$ falls below 0.02. The MBS has two levels of transmission power [10; 20] watts and the SINR outage threshold is set to 5. The energy arrival at each SBS follows a Poisson distribution with unit rate.  The volume of each energy packet arriving at the CES is $C$ times larger than the energy packet collected by each SBS. Thus, the amount of energy in each packet at the CES will be $CK$ $\mu$J. Therefore, the CES should have a more efficient method to harvest energy than each SBS (in the case of Stackelberg method). However, as the maximum battery size of the CES is limited by $S$, its total available energy is always limited by the product $SCK$ $\mu$J regardless of $M$. For both the cases, each SBS can receive up to 1.5 mJ of energy from either the CES or the environment. At the beginning, the CES is assumed to have its battery full.

\begin{figure}[h]
\minipage{0.45\textwidth}
\centering
  \includegraphics[width=3in]{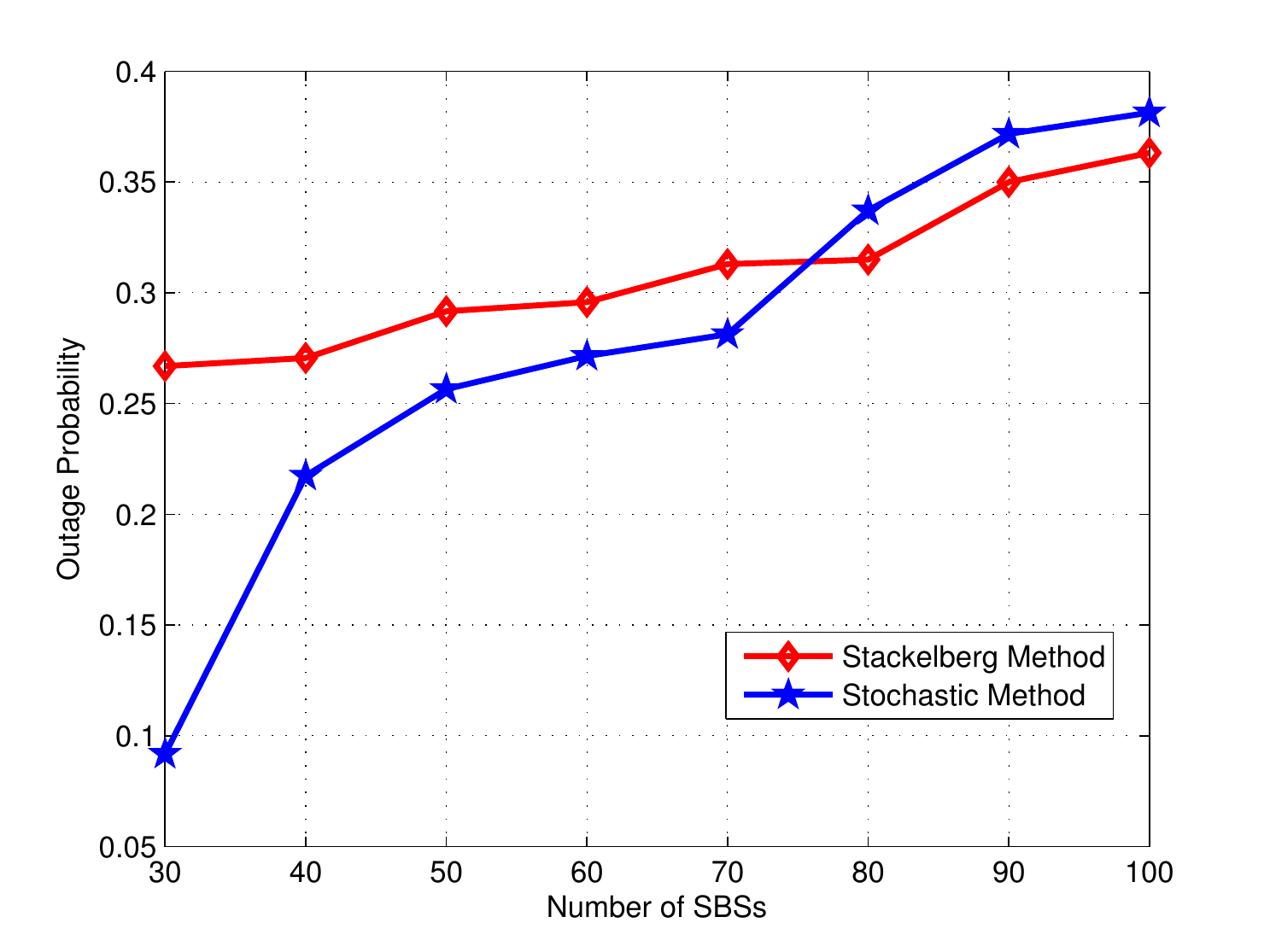}
  \caption{Outage probability of a small cell user with different number of SBSs when $S=25$ states, $C=60$, $\lambda_1=0.1$, $\lambda_0=10$.}
\label{density}
\endminipage\hfill
\minipage{0.45\textwidth}
\centering
  \includegraphics[width=3in]{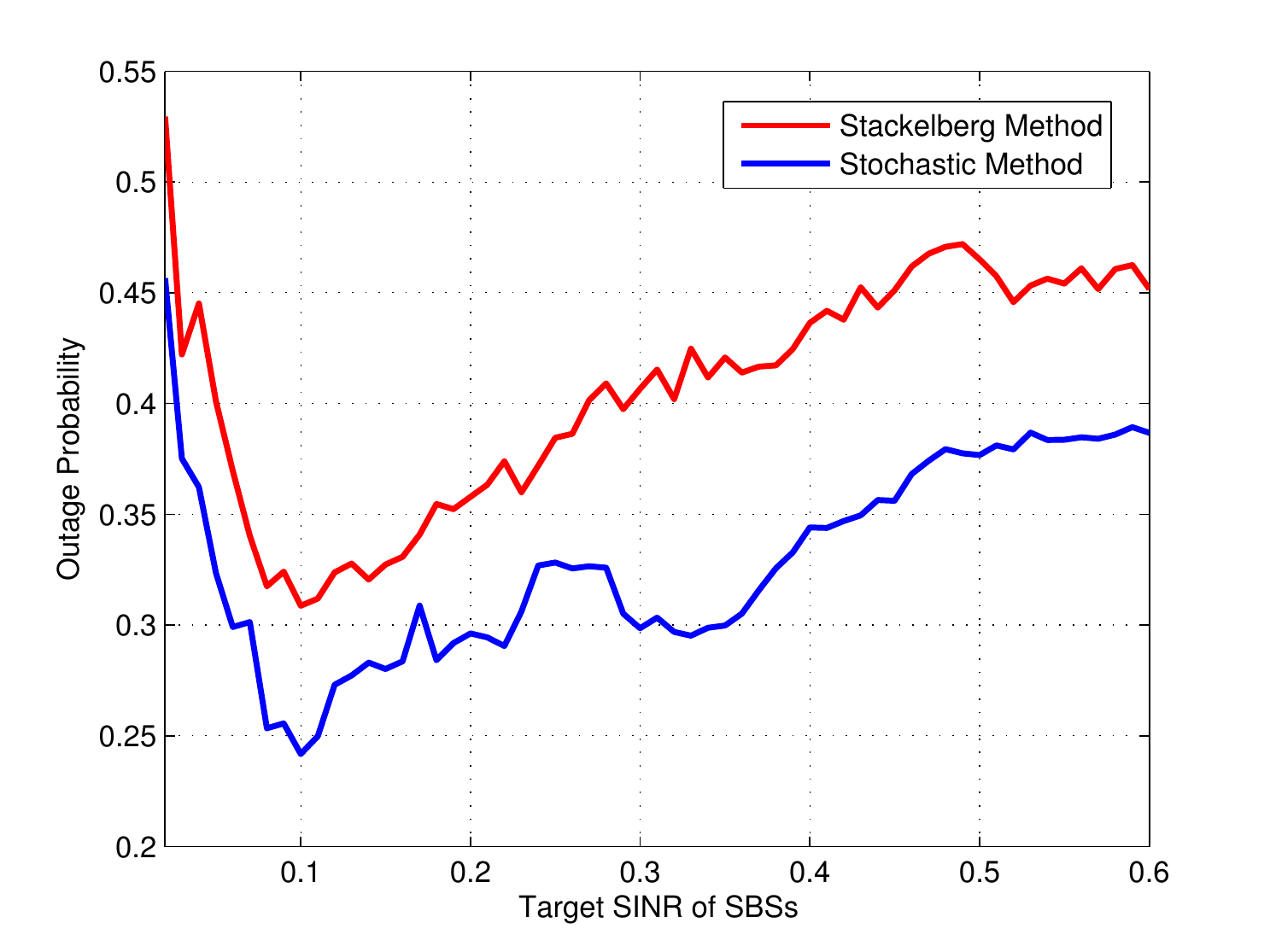}
\caption{Outage probability of a small cell user with different target ${\rm SINR}$ when $S=25$ states, $C=60$, $M=60$ SBSs, $\lambda_0=10$.}
\label{threshold}
\endminipage
\end{figure}

From Fig. \ref{density}, it can be seen that when the number of SBSs is smaller than some value, the stochastic method gives better results. This is because, the CES can redistribute the harvested energy among the SBSs based on their average channel gains, and also the QCQP in (\ref{QCQP}) gives a Nash equilibrium that favors the CES. However, at some point, its outage probability will be higher than that for the Stackelberg approach. This is not surprising since the CES can only store at most $S\times K \times C$ $\mu$J of energy. Therefore, when the number of SBSs increases, the average allocated power per SBS by the CES reduces  while the Stackelberg method  allows each SBS to harvest up to 1.5mJ no matter how large $M$ is. This means that the Stackelberg method can provide a better performance compared to using the CES when $M$ is large.

Following Fig. \ref{threshold}, by increasing the threshold SINR target $\lambda_1$ while keeping the number of SBSs fixed, we can reduce the outage probability of a user served by an SBS. This is understandable since the average ${\rm SINR}$ will approach the higher target and thus reduce the outage probability. However, for the stochastic 
methods, the outage will start to increase when the target ${\rm SINR}$ is larger than some value. To increase the average ${\rm SINR}$, the SBSs need to transmit with higher power to at least mitigate the cross-tier interference. However, a higher transmit power means a higher consumption of harvested energy, which can create shortages later. Also, higher transmit powers from the SBSs wil make the MBS to increase its own transmit power and thus create high cross-interference.  When the target ${\rm SINR}$ is higher than some value, the stochastic method will behave greedily by transmitting as much as possible and the outage will begin to increase. When the target ${\rm SINR}$ is large enough, the CES distributes all of the energy it currently has and thus the outage probability will become flat. A similar observation can be made for the Stackelberg approach. However, for this method, the energy cannot be redistributed to the SBSs with good channel gains to its users and the distribution of energy arrivals is ignored; therefore, the results are worse than those for our stochastic method. 

Fig. \ref{volume} shows the outage probability when increasing the quanta volume by choosing a higher multiplier $C$ for the CES. It is easy to see that, with a higher $C$, i.e., choosing a more effective method to harvest energy at the CES, we can achieve a better performance. The Stackelberg method does not use the CES, so the outage probability remains unchanged. Note that, since the battery size of each SBS is limited to 1.5 mJ, at some point, a higher $C$ does not improve the outage probability.

\begin{figure}[h]
\minipage{0.45\textwidth}
\centering
  \includegraphics[width=3in]{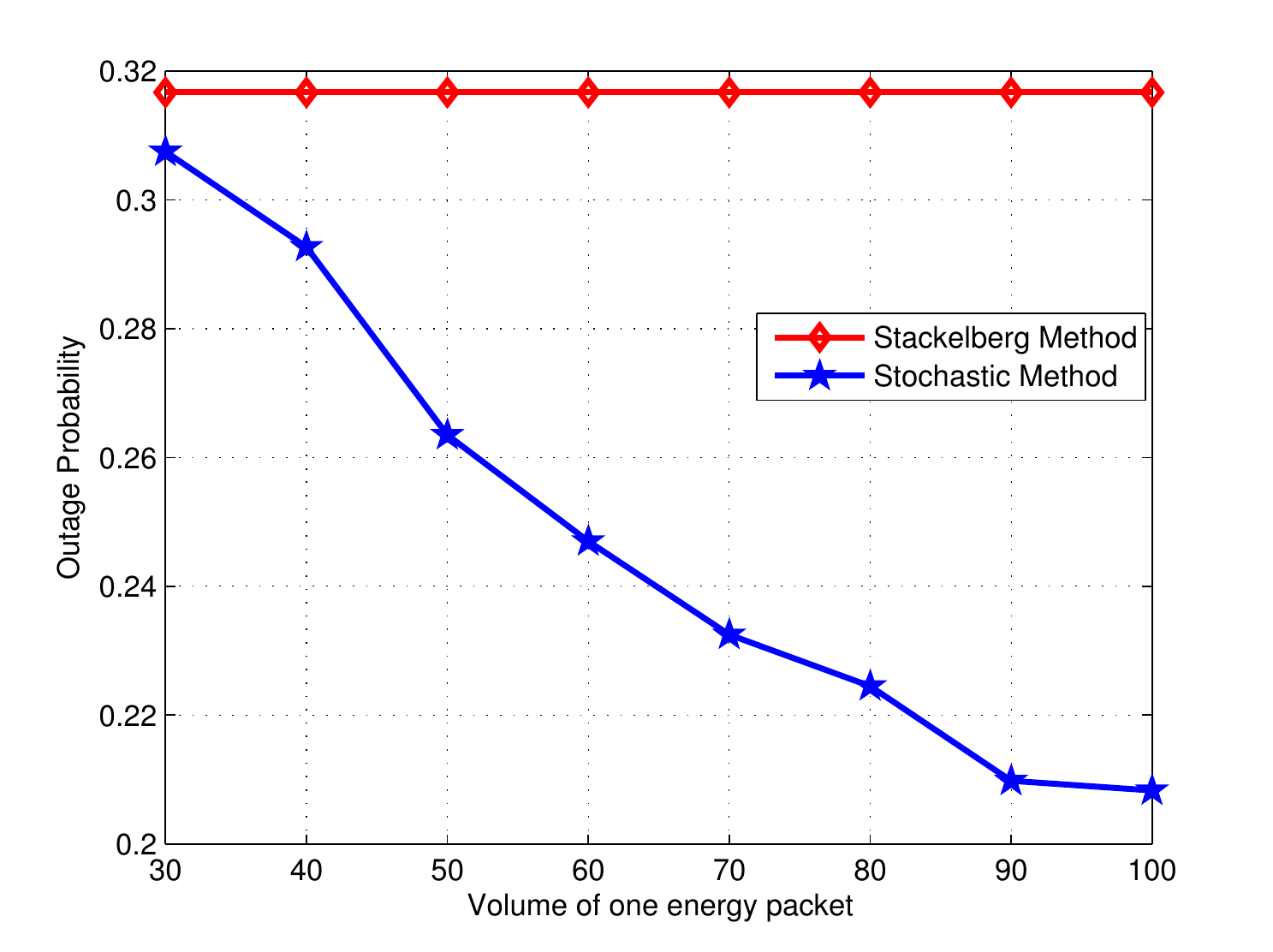}
  \caption{Outage probability of a small cell user  with different quanta volume when $S=25$ states, $M=60$ SBSs, $\lambda_1=0.1$, $\lambda_0=10$.}
	\label{volume}
\endminipage\hfill
\minipage{0.45\textwidth}
\centering
  \includegraphics[width=3in]{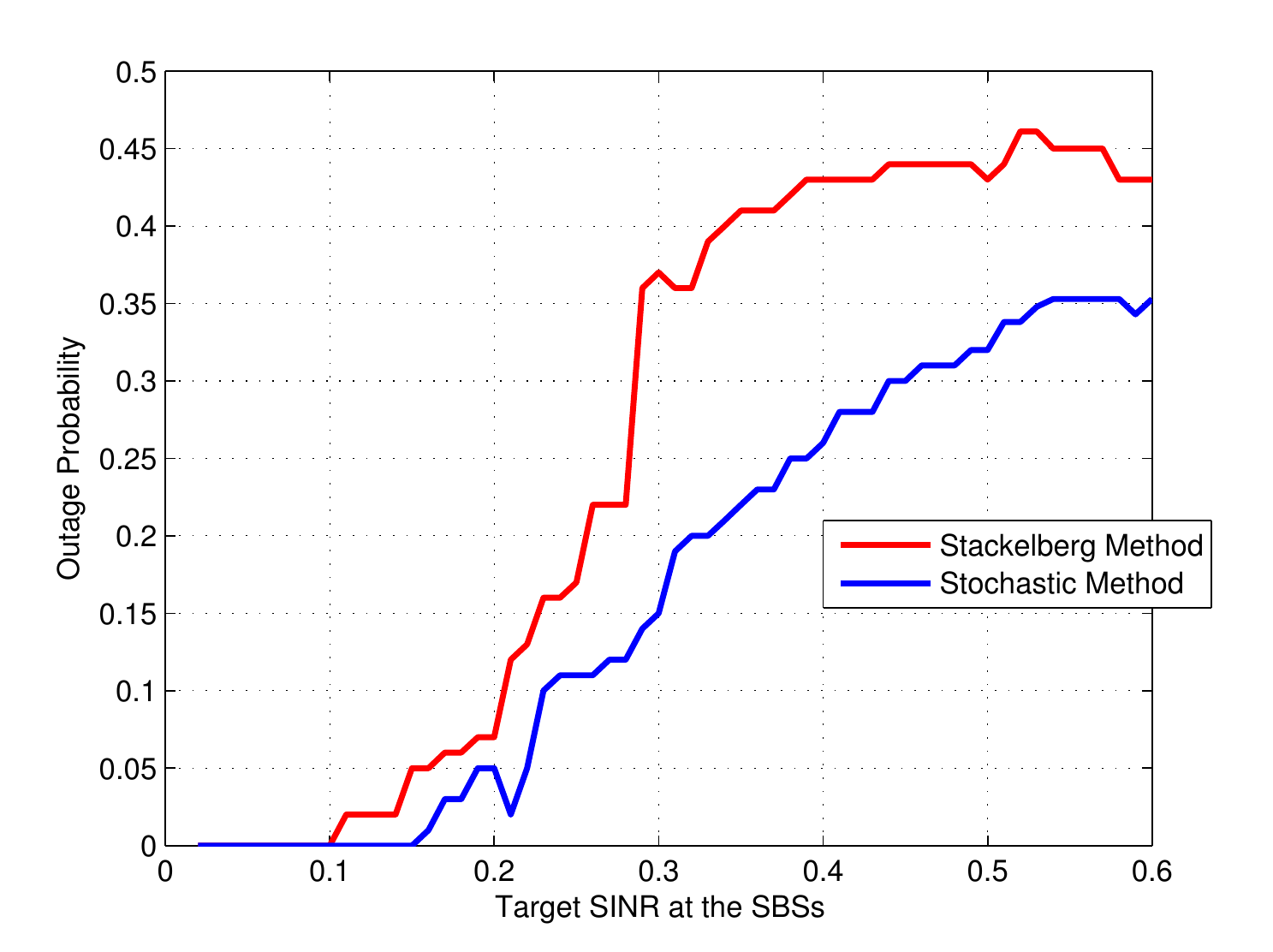}
\caption{Outage probability of a macrocell user when $S=25$ states, $M=60$ SBSs, $C=60$, $\lambda_0=10$.}
		\label{MBS_fbs}
\endminipage
\end{figure}

Fig. \ref{MBS_fbs} shows the outage probability of the macrocell user when $M=60$ and  $C=60$. The stochastic method gives better results in this case since the SBSs are more ``rational" in choosing their transmit powers in long term. Also, unlike the Stackelberg method, the CES has a fixed-energy battery, so when $M$ is large, the average amount of energy distributed to an SBS will be small, which in turn limits the cross-interference to the macrocell user. With the Stackelberg method, the SBSs only try to maximize their payoffs in the current time slot and ignore the distribution of energy; therefore, it uses a higher transmit power to compete against the MBS when the target ${\rm SINR}$ is increased; therefore, it creates a larger cross-interference and in turn increases the outage probability of the macrocell user.

In summary, we see that the centralized method using a CES can provide a better performance in terms of outage probability for both the MBS and SBSs. The advantages of using CES are two folds: First, it allows the harvested energy to be distributed to the SBSs which have good channel gains for the scheduled users, and second, it considers the probability distribution of energy arrivals when calculating the transmit power policy for both the MBS and SBSs. However, since the CES has a fixed battery size, this centralized model performs poorer when it needs to support a large number of SBSs. To improve this inflexibility, we can adjust other parameters as follows: change the target ${\rm SINR}$, increase the multiplier $C$, or increase the battery size of each SBS.

\subsection{Mean Field Game}


We assume that the transmit power at the MBS is fixed at 10W and it results in a constant noise at the user served by a generic SBS. The radius of the macrocell is $r=1000$ meter, so we have constant cross-interference $N_0=10^{-5}$ W. The target ${\rm SINR}$ is $\lambda = 0.002$ and assume that $g=\bar{g}=0.001$. We discretize the energy coefficient $R$ into 80 intervals, i.e., $R_{max}=40$ and $T_{max} = 1000$ intervals. Similar to the discrete stochastic case, each SBS can hold up to  150 $\mu$J in the battery, so the maximum transmit power is 30 mW. We impose the threshold such that an SBS will not transmit at $R=-R_{max}=-40$ or $E=0.6 ~\mu$J. The intensity of energy loss/energy harvesting, $\sigma$ is 1.

\begin{figure}[h]
\minipage{0.45\textwidth}
  \includegraphics[width=3in]{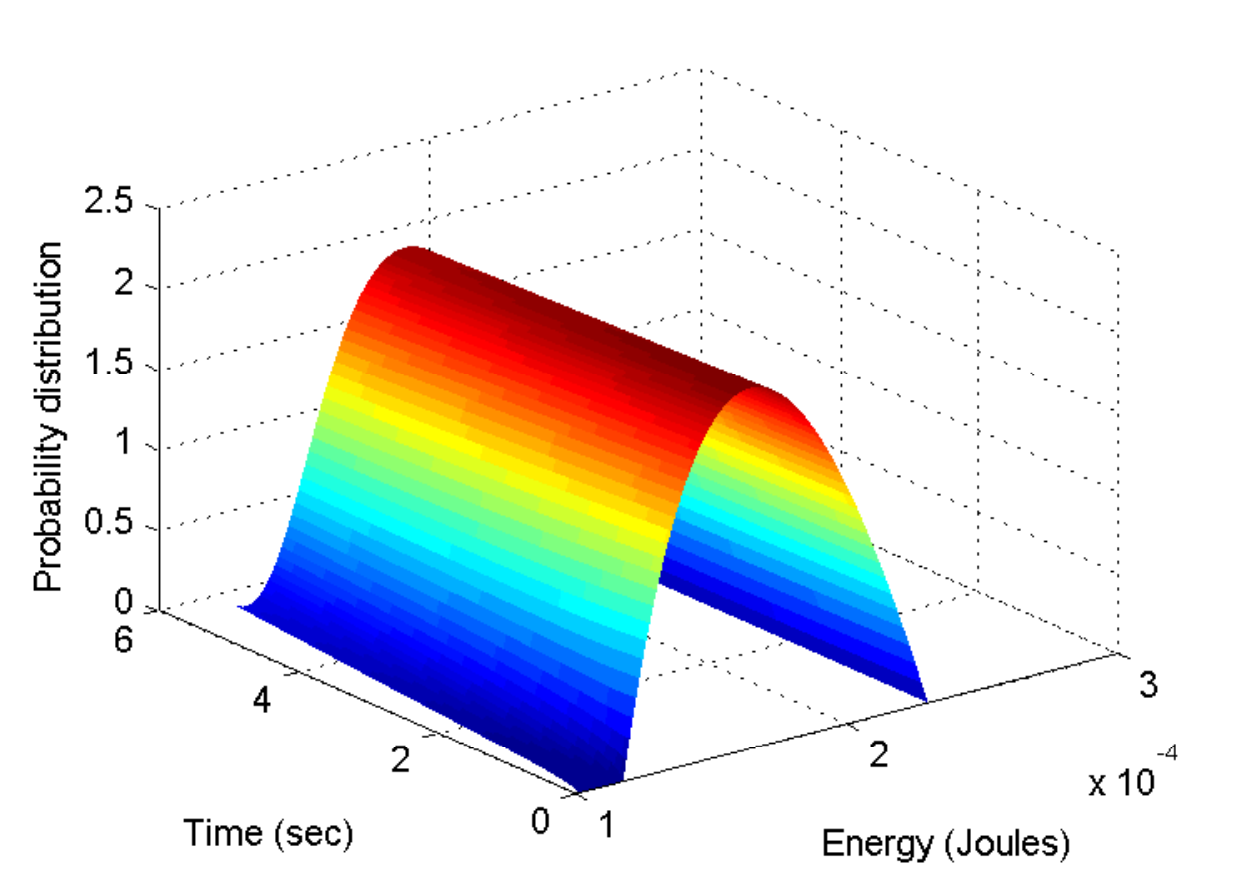}
  
  \caption{Energy distribution over time when $M=400$ SBSs/macrocell.}
\label{disMFG401}
\endminipage\hfill
\minipage{0.48\textwidth}
  \includegraphics[width=3in]{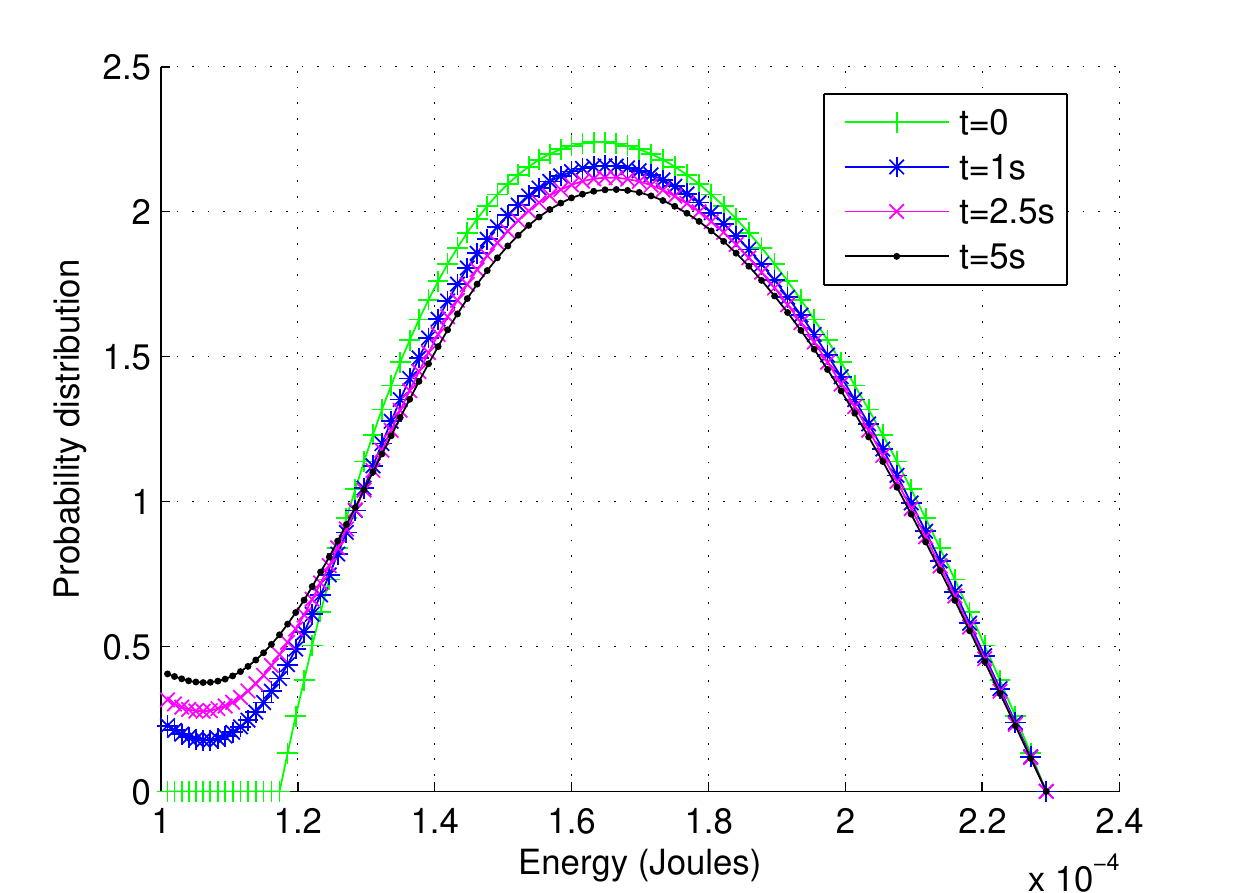}
  
\caption{Energy distribution over time when $M=400$  SBSs/macrocell.}
\label{disMFG402}
\endminipage\hfill
\end{figure}

\begin{figure}[h]
\minipage{0.48\textwidth}
  \includegraphics[width=3in]{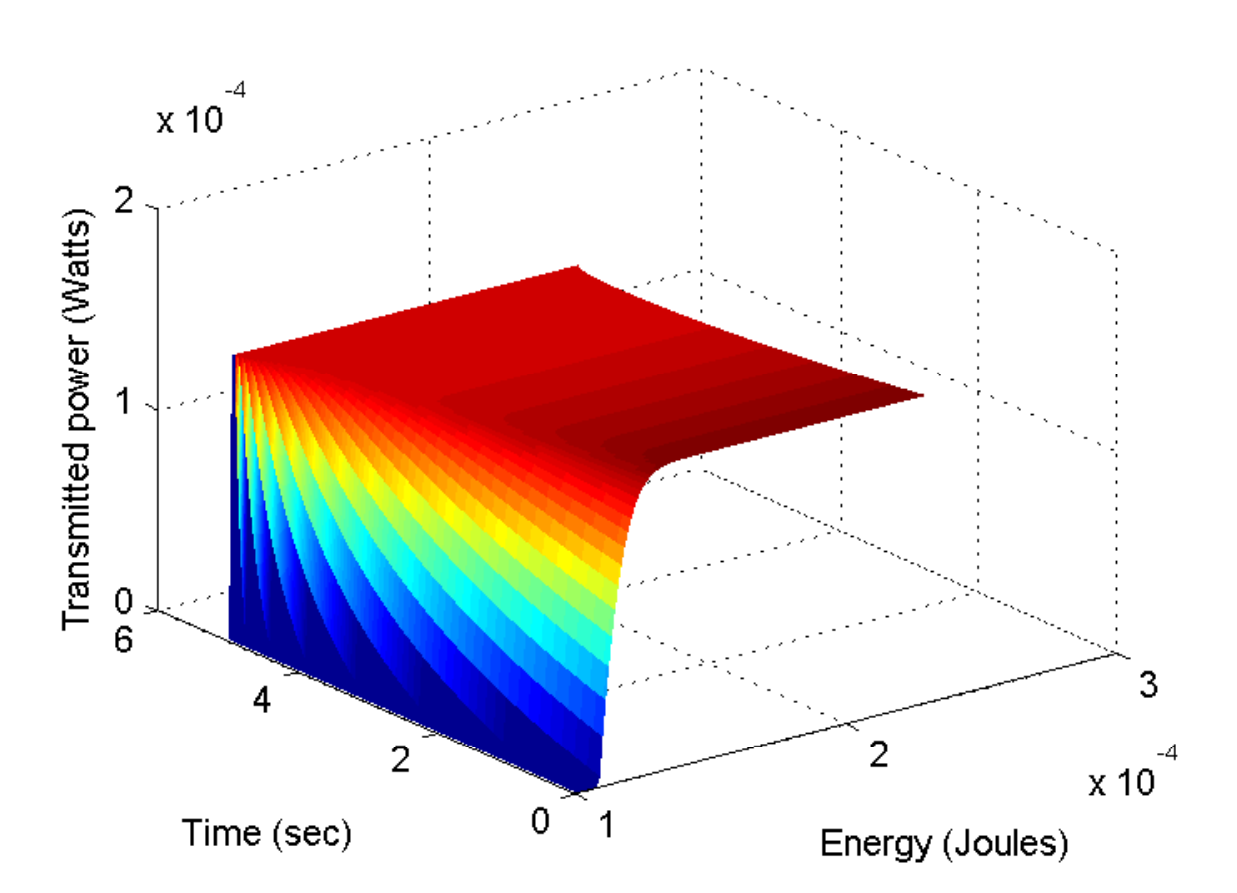}
  \caption{Transmit power to serve a generic user using MFG when $M=400$ SBSs/macrocell.}
\label{powMFG401}
\endminipage\hfill
\minipage{0.45\textwidth}
  \includegraphics[width=3in]{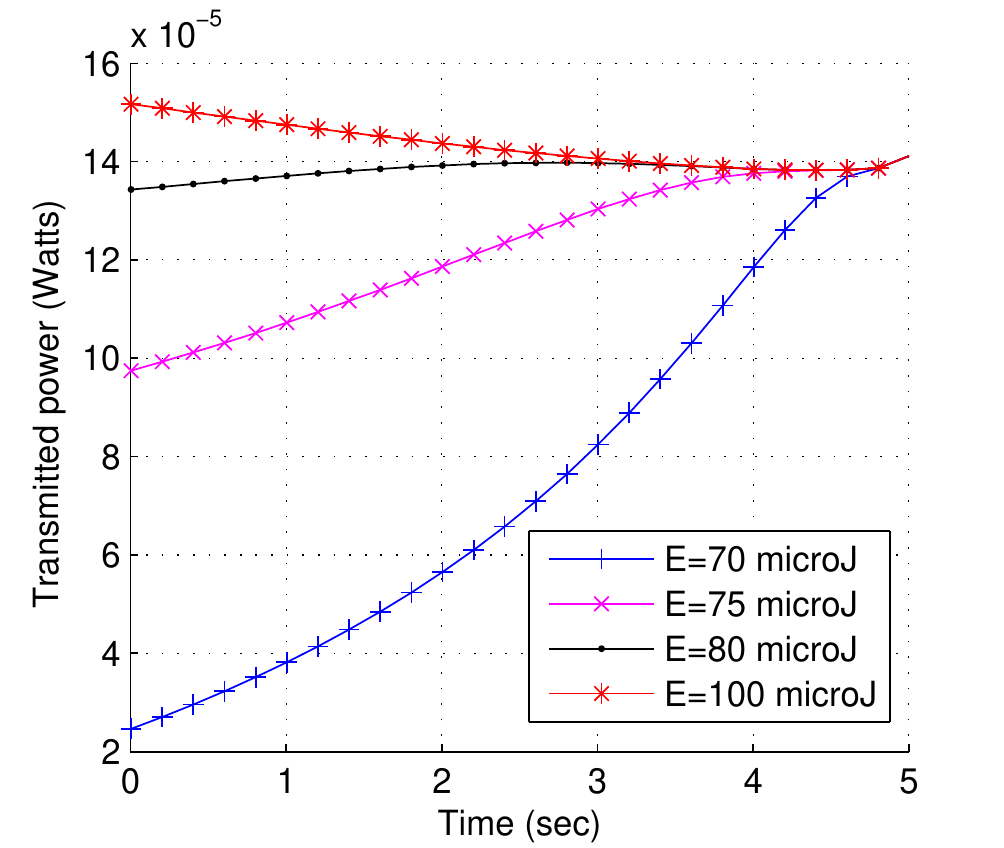}
\caption{Transmit power for different energy levels when $M=400$ SBSs/macrocell.}
\label{powMFG402}
\endminipage\hfill
\end{figure}

\begin{figure}[h]
\minipage{0.48\textwidth}
  \includegraphics[width=3in]{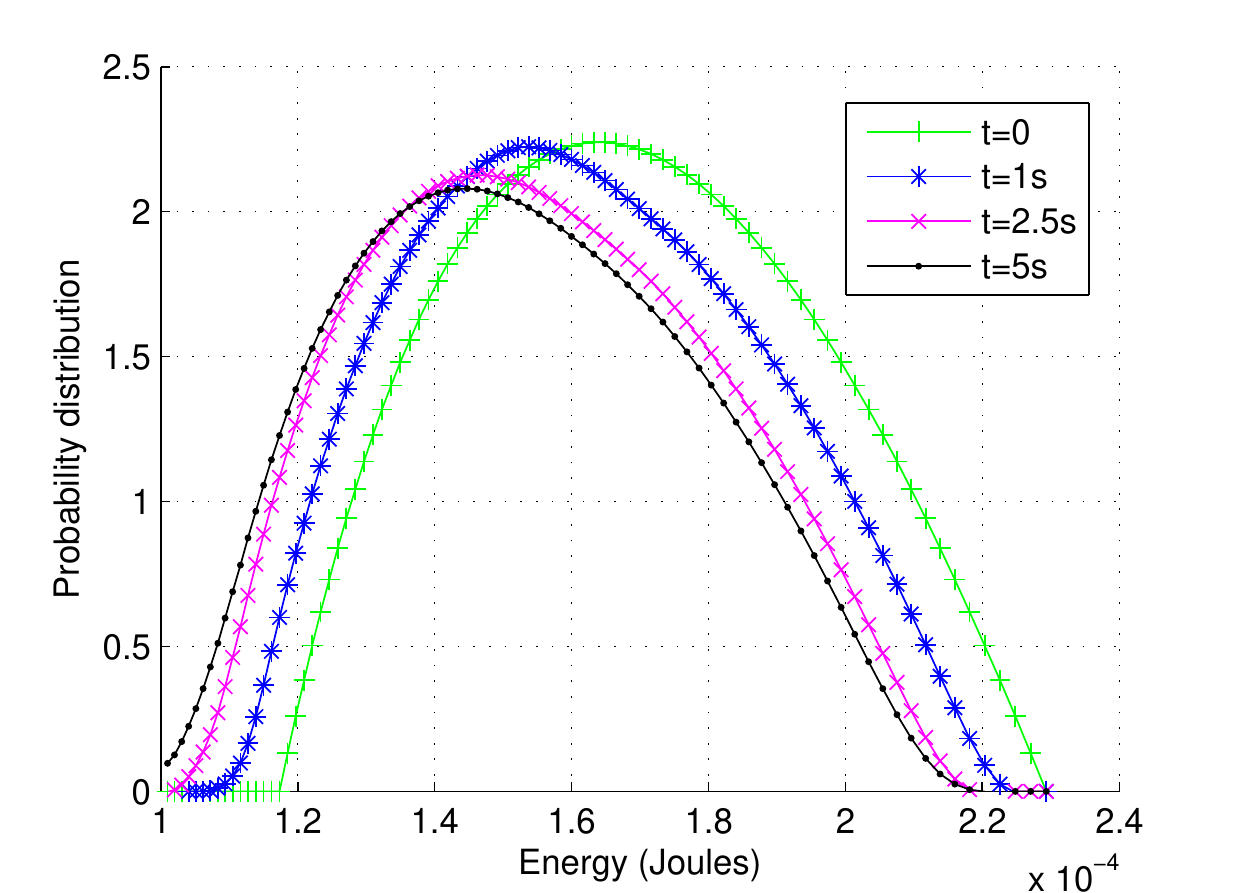}
\caption{Energy distribution over time when $M=500$ SBSs/macrocell.}
\label{disMFG502}
\endminipage\hfill
\minipage{0.48\textwidth}
  \includegraphics[width=3in]{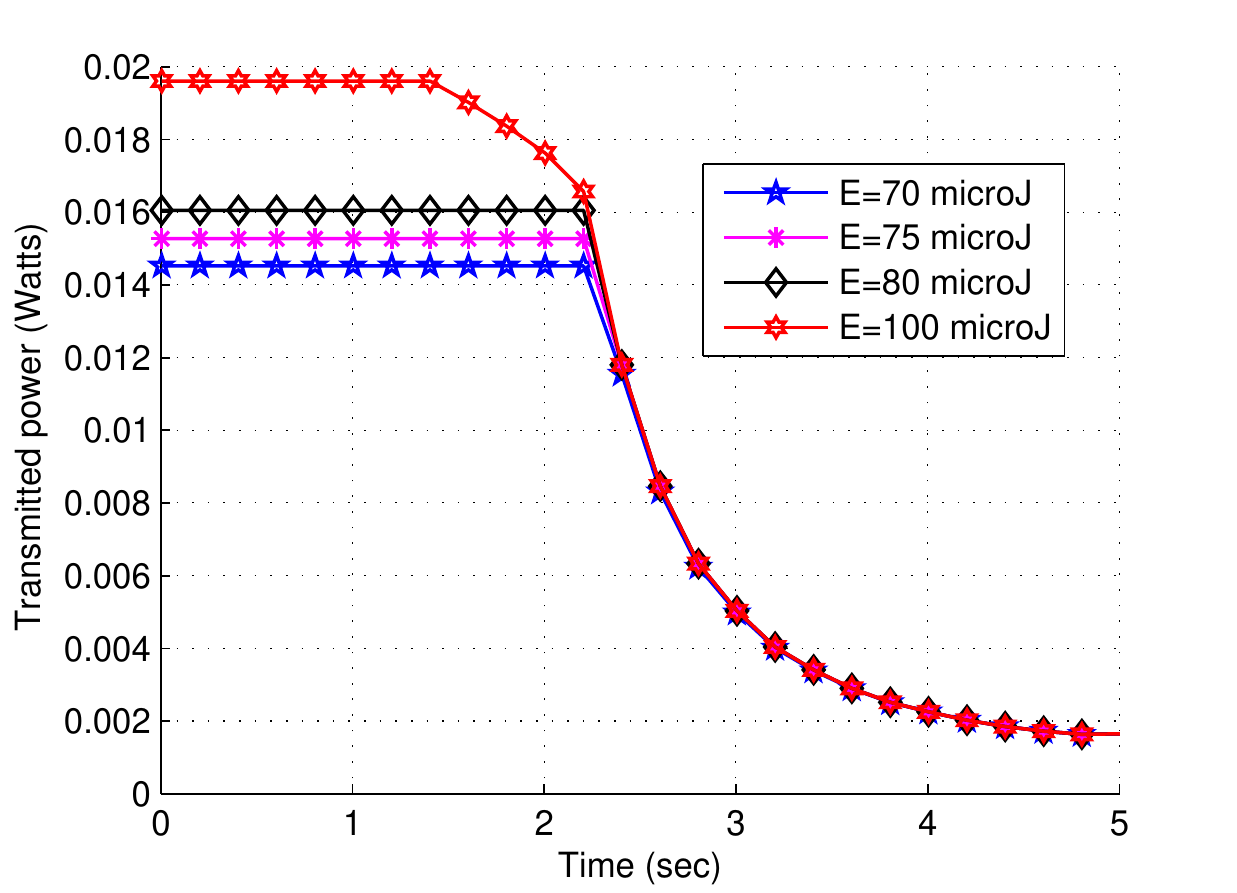}
\caption{Transmit power for different energy levels when $M=500$ SBSs/macrocell.}
\label{powMFG502}
\endminipage\hfill
\end{figure}

For $M=400$ SBSs/cell, we have $g = 0.001 > \bar{\lambda}= \lambda \bar{g} M = 0.0008$, so a generic SBS does not need to use a large amount of power in order to obtain the target ${\rm SINR}$. Notice that $\bar{p}$ is the average transmit power of a generic SBS. Therefore, if a generic SBS reduces $\bar{p}$, the cost term $\bar{\lambda}\bar{p}$ also reduces. Thus the difference between the cost and the received power $pg$ will be smaller, which is desirable. It makes sense that a generic SBS will try to reduce its power as much as possible in this case. The power cannot be zero though, because $N_0>0$. Moreover, from Fig. \ref{powMFG401} and Fig. \ref{powMFG402}, we see that, at the beginning, the SBS with higher energy (i.e., 100 $\mu$J) will transmit with a high power and will gradually reduce to some value. The SBSs with smaller battery will increase their transmit powers gradually. Since the transmit power is                                                                                                                                                                                       small, we see that in Fig. \ref{disMFG401} and Fig. \ref{disMFG402}, the energy distribution shifts to the left slowly.

On the other hand, when $M=500$ SBSs/macrocell, we have $g=\bar{{\lambda}}=0.001$. In this case, the effect is more complicated because reducing the transmit power may not reduce the gap between the received power $pg$ and the cost term $\bar{\lambda}\bar{p}+\lambda N$. Again, as can be seen from Fig. \ref{powMFG502}, the SBSs with larger available energy will transmit with large power first and after 
sometime when there is less energy available in the system, all of them start to use less power.  Therefore, as can be seen in Fig.~ \ref{disMFG502}, the energy distribution shifts toward the left with a faster speed than the previous case. 

For $M=600$ SBSs/macrocell, we have $g<\bar{{\lambda}}$. This means each SBS needs to transmit with a power larger than the average $\bar{p}$ to achieve the target ${\rm SINR}$. In Fig. \ref{powMFG602}, we see that the behavior of each SBS is the same as in the previous case. That is, the SBSs with higher energy transmit with larger power first and then reduce it, while the ``poorer" SBSs increase their transmit power over time. 

\begin{figure}[!htb]
\minipage{0.5\textwidth}
\includegraphics[width=3in]{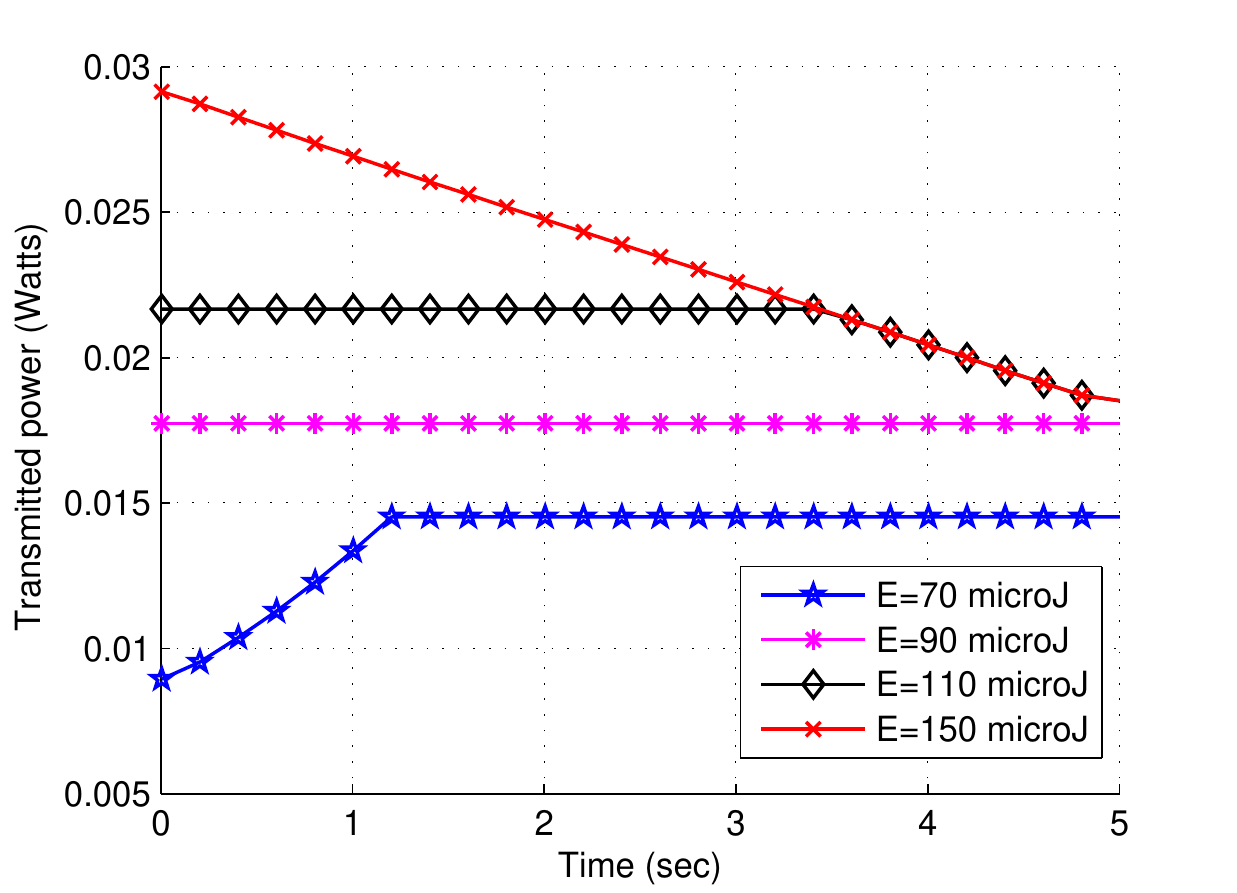}
\caption{Transmission power over time when $M=600$ SBSs/macrocell.}
\label{powMFG602}
\endminipage
\hfill
\minipage{0.5\textwidth}
\centering
\captionsetup{justification=centering}
\includegraphics[width=3in]{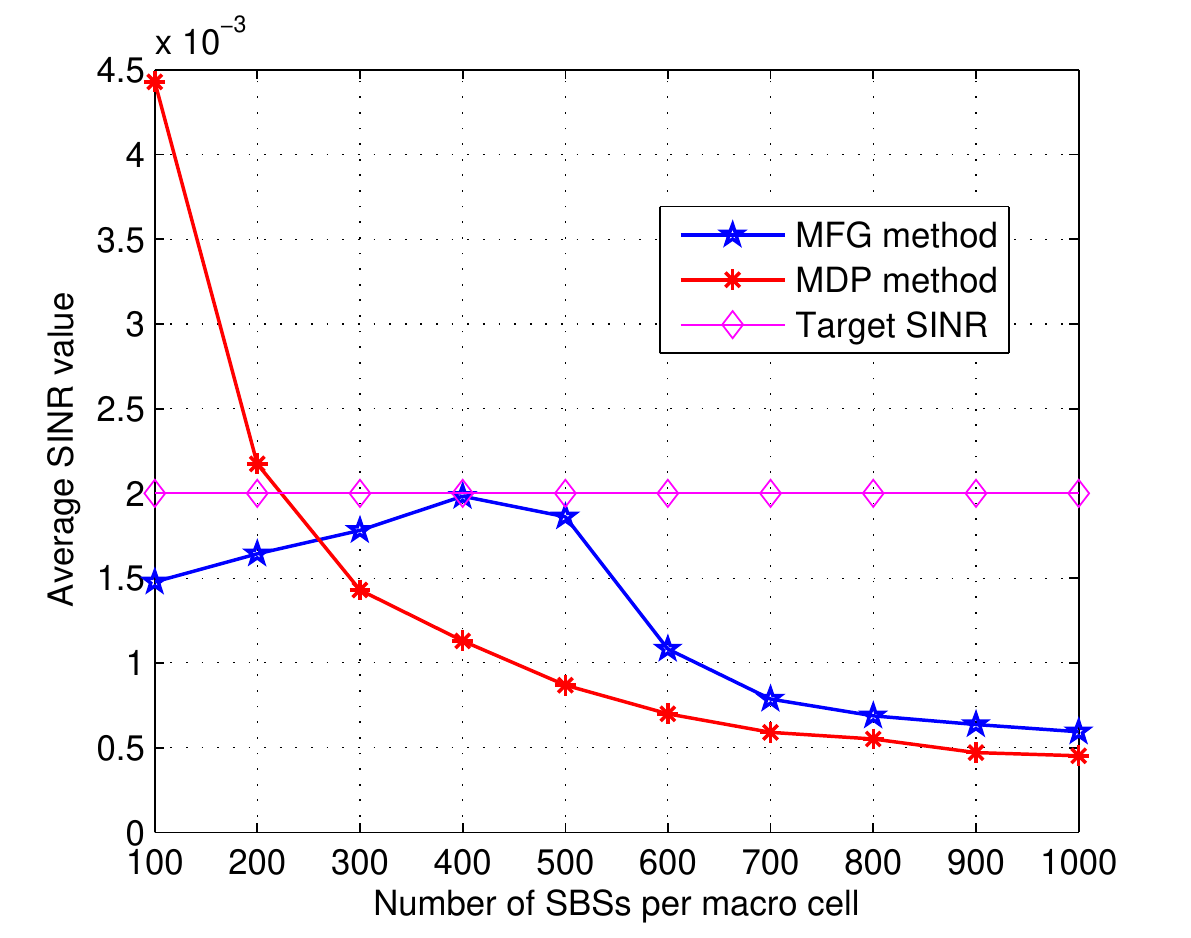}
\caption{Average SINR at a generic SBS.}
\label{MfgVsDiscrete}
\endminipage\hfill
\end{figure}

We compare the MFG model against the stochastic discrete model  for different values of  $M$. For simplicity, we assume that each SBS has the same link gain to its user as $g=0.001$. Also, we assume that the channel gain from each SBS to the user of another SBS is $\bar{g}=0.001$. Using \textbf{Remark~ \ref{rem2}}, it can easily be proven that in this case,  each SBS will transmit with the same power, i.e., if the CES sends $QCK$ $\mu$J of energy to $M$ SBSs, then each SBS receives $QCK/M$ $\mu$J. Then, the interference at each SBS will be calculated as $\frac{g}{(M-1)\bar{g}+N_0M\Delta T / (QCK) }$ with multiplier $C=20$ and the maximum battery size of the CES as $S=101$. Because the MBS is not a player of the game, the simulation step becomes simpler, and we only need to solve a linear program for the MDP problem instead of a QCQP. Therefore, we can call it as MDP method to accurately reflect the difference.

For the discrete stochastic case, we discretize the Gaussian distribution to model the  energy arrivals at the CES. 
The battery size of each SBS is still 150 $\mu J$. The average SINR of a generic small cell user using both the MFG and MDP models with different density is plotted in Fig.~\ref{MfgVsDiscrete}. We see that using the MFG model, the average ${\rm SINR}$ increases at the beginning and then it starts falling at some point. This is because, when the density is low, the interference from the MBS is noticeable (i.e., $10^{-5}$ W in our simulation). From the previous figures, it can be seen that an SBS will increase its power when the density is higher. Therefore, after some point the co-tier interference becomes dominant and the average ${\rm SINR}$ will begin to drop. It means at some value of the density, e.g., $M=400$ SBSs/macrocell in Fig.~\ref{MfgVsDiscrete}, we obtain the optimal average ${\rm SINR}$. We notice that the MFG model performs better than the MDP model with the CES. This is due to the limited battery size of CES which makes it difficult to support a large number of SBSs.

In summary, we have two important remarks for the MFG model. First, if the density of small cells is high, the SBSs will transmit with higher power. Second, from  Fig.~\ref{MfgVsDiscrete}, we see that by choosing a suitable density of the SBSs, we can obtain the highest average ${\rm SINR}$. From (\ref{pow}), it can be easily proven that the average ${\rm SINR}$ at a user served by an SBS will always be smaller than the target ${\rm SINR}$ (because $\partial_R U < 0$). Therefore, the highest average ${\rm SINR}$ is also the closest to the target ${\rm SINR}$, which is our objective in the first place.

\section{Conclusion}

We have proposed a discrete single-controller discounted two-player stochastic game to address the problem of downlink power control in a two-tier macrocell-small cell network under co-channel deployment where the SBSs use stochastic renewable energy source. For the discrete case, the strategies for both the macrocell and SBSs have been derived by solving a quadratic optimization problem. The numerical results have shown that these strategies can perform well in terms of outage probability experienced by the users. We have also applied a mean field game model to obtain the optimal power for the case when the number of SBSs is very large. We have also discussed the implementation aspects of these models in a practical network. 

In this paper, we have not explicitly considered the correlation in the energy arrival process.  However, this correlation can be modeled by assuming that the energy arrival has Markovian property. In this case, to calculate the transition probability, we will need to extend the definition of the state to a two-element vector, one is the current energy in the storage and the second is the energy arrival at this time slot. Moreover, we have not considered the details of cost and latency analysis related to the information exchange to and from the CES and also the charge and discharge loss of the battery storage. These issues can be addressed in future.


\appendices

 
\renewcommand{\theequation}{A-\arabic{equation}}  
\setcounter{equation}{0}  
\section{} 

Denote the distance between the BS $B$ to its user $D$ as $BD=a$. If $D$ is uniformly located inside the disk centred at $B$, the PDF of $BD$ is $f_D(BD=a) = \frac{2a}{r^2}$. Denote by $\theta$ the value of the angle $\angle{ABD}$, $\theta$ is uniformly distributed between $(0,2\pi)$. Using the cosine law
$d^2 = R^2 + a^2 - 2aR \cos(\theta)$,
	we obtain
	\begin{align}
	\mathbb{E}[d^{-4}]  = \int_0^{2\pi}\int_0^r (R^2 + a^2 - 2aR \cos\theta)^{-2} \frac{1}{2\pi} \frac{2a}{r^2} \, da \, d\theta. 
	\end{align}
	First, we solve the indefinite integral over $\theta$ as 
$
		\int (R^2 + a^2 - 2aR \cos\theta)^{-2}d\theta = f_1(a,\theta) + f_2(a,\theta) + L,
$
	where $L$ is a constant and
	\begin{align*}
		f_1(a,\theta) = &\frac{2(R^2+a^2)}{(R^2-a^2)^3}\arctan\frac{(R+a)\tan{\frac{\theta}{2}}}{R-a}, \quad \mbox{and} \\
		f_2(a,\theta) = 
		&\frac{2aR\sin{\theta}(R^2+a^2-2aR\cos\theta)}{(R^2-a^2)^2}.
	\end{align*}
Since $\sin 0 = \sin 2\pi = 0 $,
after integrating $f_2$ over $[0,2\pi]$, we can ignore it. Thus
	\begin{equation*}
		\int_{0}^{2\pi} (R^2 + a^2 - 2aR\cos\theta)^{-2}d\theta = {\pi}\frac{2(R^2+a^2)}{(R^2-a^2)^3}.
	\end{equation*} 
Next, we integrate the above result over  $a$ to obtain the indefinite integral as:
	\begin{equation}
		\frac{1}{r^2}\int a\frac{2(R^2+a^2)}{(R^2-a^2)^3}da	 = \frac{1}{r^2}\frac{a^2}{(R^2-a^2)^2} + L.
	\end{equation}		
Applying the upper and lower limits of $a$, we complete the proof.

\renewcommand{\theequation}{B-\arabic{equation}}    
  \setcounter{equation}{0}  
\section{Proof of Theorem 2}

First, we prove that given $\mathbf{n}$, there exists a  pure stationary strategy $\mathbf{m}$ which is the best response  of the MBS against $\mathbf{n}$. Since the action set of the MBS is fixed, at each state $s$, given strategy $\mathbf{n}(s)$ of the CES, the MBS just needs to choose a mixed stationary strategy $\mathbf{m}(s)$ such that its average payoff is maximized.
At state $s$, the average utility function of the MBS is 
\begin{equation}
	\mathbb{E}[U_1] = \sum\limits_{p_0^s \in \mathcal{P}}\sum\limits^{s}_{j=0} - (p_0^s\bar{g}_0-\lambda_0\bar{I}(p_0^s,j))^2 \mathbf{m}(s,p_0^s)\mathbf{n}(s,j),
\end{equation}
where $p_0^s$ and $j \in \{0,...,s\}$ are the transmit power of the MBS and the number of energy packets distributed at the CES at state $s$, respectively. $\bar{I}(p_0^s,j)$ is the average interference from other SBSs to the MBS if the CES distributes $j$ energy packets and the MBS transmits with power $p_0^s$. We have
$
		\bar{I}(p_0^s,j) = \sum\limits^{M}_{i=1}p_i\bar{g}_{0,i}+N,
$ where $(p_1,p_2,...,p_M)$ is the solution of  \textbf{Remark~ \ref{rem2}}, with $Q$ and $p_0$ replaced by $j$ and $p_0^s$, respectively.
 Since $\sum_{p_0^s \in \mathcal{P}}\mathbf{m}(s,p_0^s)=1$ and $\mathbf{m}_s$ is a non-negative vector, we have
\begin{equation}\label{maxMBS}
		\mathbb{E}[U_1] \leq \max\limits_{p_0^s \in \mathcal{P}} \left\{ -\sum\limits^{s}_{j=0} (p_0^s\bar{g}_0-\lambda_0\bar{I}(p_0^s,j))^2	\mathbf{n}(s,j) \right\}. 
\end{equation}

Since the set $\mathcal{P}$ is fixed and finite, there always exists at least one value of $p_0^s$ that achieves the maximum for the right hand side. That means when the game in state $s$, the MBS can choose this power level with probability of 1. However, obtaining a closed-form $p_0^s$ is difficult because first we need to find  $(p_1,...,p_M)$ in closed-form by solving $(\ref{dist})$.

Nevertheless, if the average channel gains from each SBS to the macrocell user (say $\bar{g}_{0,SBS}$ ) are same, we can obtain $p_0^s$ in closed form by defining
$
	\bar{I}(p_0^s,j) = \sum\limits_{i=1}^{M}p_i\bar{g}_{0,SBS}+N_0 = \bar{g}_{0,SBS}\sum_{i=1}^{M}p_i+N_0
	= \frac{K}{\Delta T}j\bar{g}_{0,SBS}+N_0.
$
The final equality is from (\ref{power}). Replacing this result back into (\ref{maxMBS}), we have 
\begin{equation}
\begin{split}
	\mathbb{E}[U_1] \leq \max\limits_{p_0^s \in \mathcal{P}}  -\sum\limits^{s}_{j=0} (p_0^s\bar{g}_0-\lambda_0(\frac{K}{\Delta T}j\bar{g}_{0,SBS}+N_0))^2
			\mathbf{n}(s,j). 	
			\end{split}
\end{equation}
The right hand side of this inequality is a strictly concave function (downward parabola) with respect to  $p_0^s$. Note that $\sum_{j=0}^{s} \mathbf{n}(s,j)=1$. The parabola will achieve the maximum value at its vertex given by
\begin{equation}
 p_0^{s*} = \frac{\lambda_0 \sum_{j=0}^{s}\left(\frac{K}{\Delta T}\bar{g}_{0,SBS}j + N_0\right)\mathbf{n}(s,j)}{\bar{g}_0}.
 \end{equation}
If $p_0^{s*}$ is not available in $\mathcal{P}$, since the right hand side of the inequality above is a parabola w.r.t. $p_0^{s}$, the best response $p_0^{s}$ to $\mathbf{n}(s)$ is the one nearest to the vertex $p_0^{s*}$.

On the other hand, given strategy $\mathbf{m}$ of the MBS,  the problem of finding the best response strategy $\mathbf{n}$ for the CES is simplified into a simple MDP in (\ref{P}). Then, there always exists a pure stationary strategy $\mathbf{n}$ \cite[Chapter~ 2]{filar3}. This completes the proof.
 
\renewcommand{\theequation}{C-\arabic{equation}}    
  \setcounter{equation}{0}  
\section{Proof of Lemma \ref{MFGPower}}
Using the stochastic differential equation in (\ref{mfg:1}) at time $t$,
$
  		dE(t) = -p(t,E(t))dt + \sigma dW_t,
$
  	we obtain the integral form as follows:
  	\begin{equation}\label{integralForm}
  	\begin{split}
 		E(t+t') -E(t) &= -\int_{t}^{t+t'}p(v,E(v))dv + \int_{t}^{t+t'}\sigma dW_v \\
  		&= -p(\bar{t},E(\bar{t}))t' + \sigma (W_{t+t'}-W_{t}),
  	\end{split}
  	\end{equation}
  	where $\bar{t} \in (t,t+t')$. We obtain the second equality using the mean value theorem for integrals: If $G(x)$ is a continuous function and $f(x)$ is integrable function that does not change sign on the interval $[a,b]$, then there exists $x \in [a,b]$ such that $
	  	\int_{a}^{b} G(t)f(t) dt = G(x)\int_{a}^{b}f(t)dt.
$
  	Since equation (\ref{integralForm}) is true for all SBSs, taking expectation of this equality above for all SBSs (or all possible values of $E$), we have
  	\begin{align}
  			\mathbb{E}[E(t+t')] -\mathbb{E}[E(t)] &=  -\mathbb{E}[p(\bar{t},E(\bar{t}))]t' + \sigma\mathbb{E}[W_{t+t'}-W_{t}], \\
  			\int\limits_{0}^{\infty}E m(t+t',E)& dE - \int\limits_{0}^{\infty}Em(t,E)dE   		
  			= -t'\mathbb{E}[p(\bar{t},E(\bar{t}))],
  	\end{align}
where $m(t,E)$ is the distribution of $E$ in the system at time instant $t$. Using the fact that $W$ is a Wiener process, $W_{t+t'}-W_{t}$ follows a normal distribution with mean zero (i.e., $\mathbb{E}[W_{t+t'}-W_{t}]=0$).

By dividing both sides by $t'$ and letting $t'$ to be very small (or $t' \rightarrow dt$), we have $\bar{t} \rightarrow t$ and 
  	\begin{equation}
  		\frac{d\int\limits_{0}^{\infty}Em(t,E)dE}{dt} = - \left(\int\limits_{0}^{\infty}p(t,E)m(t,E)dE\right) = -\bar{p}(t).
  	\end{equation}
Using  $m(t,R)=m(t,E)$, $dE = e^RdR$, and changing the variable $E$ to $R$, we complete the proof.


\end{document}